\theoremstyle{plain}
\newtheorem{thm}{\protect\theoremname}
\theoremstyle{plain}
\newtheorem{prop}[thm]{\protect\propositionname}
\theoremstyle{plain}
\newtheorem{cor}[thm]{\protect\corollaryname}
\theoremstyle{plain}
\newtheorem{lem}[thm]{\protect\lemmaname}
\theoremstyle{definition}
\newtheorem{example}[thm]{\protect\examplename}
\theoremstyle{definition}
\newtheorem{defn}[thm]{\protect\definitionname}
  \providecommand{\corollaryname}{Corollary}
  \providecommand{\examplename}{Example}
  \providecommand{\lemmaname}{Lemma}
  \providecommand{\propositionname}{Proposition}
  \providecommand{\theoremname}{Theorem}
  \providecommand{\definitionname}{Definition}
\newcommand{\res}{\operatorname{res}}
\begin{document}

%
% paper title
% Titles are generally capitalized except for words such as a, an, and, as,
% at, but, by, for, in, nor, of, on, or, the, to and up, which are usually
% not capitalized unless they are the first or last word of the title.
% Linebreaks \\ can be used within to get better formatting as desired.
% Do not put math or special symbols in the title.
\title{Multi-point Codes over Kummer Extensions}
%
%
% author names and IEEE memberships
% note positions of commas and nonbreaking spaces ( ~ ) LaTeX will not break
% a structure at a ~ so this keeps an author's name from being broken across
% two lines.
% use \thanks{} to gain access to the first footnote area
% a separate \thanks must be used for each paragraph as LaTeX2e's \thanks
% was not built to handle multiple paragraphs
%

\author{Chuangqiang~Hu and~Shudi~Yang    % <-this % stops a space	
\thanks{C. Hu is with the School of Mathematics, Sun Yat-sen University, Guangzhou 510275, P.R.China,\protect\\
\quad	S. Yang is with the School of Mathematical
	Sciences, Qufu Normal University, Shandong 273165, P.R.China. \protect\\
	\protect\\
	% note need leading \protect in front of \\ to get a newline within \thanks as
	% \\ is fragile and will error, could use \hfil\break instead.
	E-mail: huchq@mail2.sysu.edu.cn,~{yangshd3@mail2.sysu.edu.cn}}\protect\\
%	~{zhaochan3@mail.sysu.edu.cn}}% <-this % stops a % <-this % stops a space
\thanks{Manuscript received *********; revised ********.}% <-this % stops a space
}

\maketitle

% As a general rule, do not put math, special symbols or citations
% in the abstract or keywords.
\begin{abstract}
This paper is concerned with the construction of algebraic geometric codes defined from Kummer extensions. It plays a significant role in the study of such codes to describe bases for the Riemann-Roch spaces associated with totally ramified places.
Along this line, we give an explicit characterization of Weierstrass semigroups and pure gaps. Additionally, we determine the floor of a certain type of divisor introduced by Maharaj, Matthews and Pirsic. Finally, we apply these results to find multi-point codes with good parameters.  As one of the examples, a presented code with parameters $ [254,228,\geqslant 16] $ over $ \mathbb{F}_{64} $ yields a new record.

\end{abstract}

% Note that keywords are not normally used for peerreview papers.
\begin{IEEEkeywords}
Algebraic geometric codes, Kummer extension, Weierstrass semigroup, Weierstrass pure gap.
\end{IEEEkeywords}

% For peer review papers, you can put extra information on the cover
% page as needed:
% \ifCLASSOPTIONpeerreview
% \begin{center} \bfseries EDICS Category: 3-BBND \end{center}
% \fi
%
% For peerreview papers, this IEEEtran command inserts a page break and
% creates the second title. It will be ignored for other modes.
\IEEEpeerreviewmaketitle

\section{Introduction}
% The very first letter is a 2 line initial drop letter followed
% by the rest of the first word in caps.
%
% form to use if the first word consists of a single letter:
% \IEEEPARstart{A}{demo} file is ....
%
% form to use if you need the single drop letter followed by
% normal text (unknown if ever used by IEEE):
% \IEEEPARstart{A}{}demo file is ....
%
% Some journals put the first two words in caps:
% \IEEEPARstart{T}{his demo} file is ....
%
% Here we have the typical use of a "T" for an initial drop letter
% and "HIS" in caps to complete the first word.
\IEEEPARstart{T}{he} algebraic geometric (AG) codes were introduced by V.D. Goppa~\cite{goppa1977codes}, which were defined as the image of the Riemann-Roch space by the evaluation at several rational places. Since then, the study of AG codes becomes an important instrument in theory and practice. The famous Tsfasman-Vl\v{a}du\c{t}-Zink theorem says
that the parameters of the AG codes associated with asymptotically good towers are better that the Gilbert-Varshamov bound~\cite{Tsfasman,niederreiter2001rational}. Pellikaan, Shen and van Wee~\cite{pellikan1991linear} showed that any arbitrary linear code is in fact an AG code.

Given an AG code of fixed length, the first task is to determine its parameters: dimension and minimum distance. In order to determine the dimension and construct the generator matrix, it is necessary to calculate the related Riemann-Roch space.
By means of the Riemann-Roch theorem, one obtains a non-trivial lower bound, Goppa bound, for the minimum distance in a very general setting~\cite{stichtenoth2009algebraic}.
Garcia, Kim and Lax improved the Goppa bound using arithmetical structure of the Weierstrass gaps at one place in~\cite{garcia1993consecutive,garcia1992goppa}.
 Homma and Kim~\cite{Homma2001Goppa} introduced the concept of pure gaps and demonstrated a similar result for a divisor concerning a pair of places. And this was generalized to several places by Carvalho and Torres in~\cite{carvalho2005goppa}.
Maharaj, Matthews and Pirsic~\cite{Maharaj2005riemann,maharaj2006floor} extended this construction by introducing the notion of the floor of a divisor and obtained improved bounds on the parameters of AG codes.

Codes over specific Kummer extensions were well-studied in the literature. For instance, Hermitian curves play an important role in coding theory due to their efficient encoding and decoding algorithms. Almost all of the known maximal curves arise from Hermitian curves. See~\cite{cossidente1999curves,matthews2005weierstrass} and the references therein. Many authors examined one-point codes from Hermitian curves and developed efficient methods to decode them~\cite{stichtenoth2009algebraic,Guruswami,Yang,Yang2}. The minimum distance of Hermitian two-point codes had been first determined by Homma and
Kim~\cite{Homma,Homma2,Homma3,Homma4}. In~\cite{Maharaj2005riemann}, Maharaj and Matthews determined explicit bases for the Riemann-Roch space of a divisor of the form $ rP_{\infty}+E $, where the support of $ E $ lies on a line. This allowed them to give an explicit formula for the floor of such a divisor. In~\cite{Geil2003normtrace}, Geil considered codes from norm-trace curves and determined the true minimum distance of these codes. Matthews~\cite{matthews2005weierstrass} determined the Weierstrass semigroup of
any $ r $-tuple rational points on the quotient of the Hermitian curve defined by the equation $ y^q + y = x^m $ over
$ \mathbb{F}_{q^2} $ where $  m>2 $ is a divisor of $ q + 1 $.
Sep\'{u}lveda and Tizziotti~\cite{sepulveda2014weierstrass} investigated two-point codes over a specific Kummer extension given by $ y^{q^l+1}=x^q+x $.

In this paper, we extend the results of one- and two-point codes over Kummer extensions studied by Masuda, Quoos and Sep{\'u}lveda~\cite{Masuda2}, to multi-point codes. We consider Kummer extensions given by $ y^m=f(x)^{\lambda} $ where $ f(x) $ is a polynomial over $ \mathbb{F}_q $ of degree $ r $ with
$ \gcd(m,r\lambda)=1 $, and all the roots of $ f(x) $ are pairwise distinct.
Let $ G $ be a divisor such that whose support is contained in one of the principal divisor of $ y $. An explicit basis for the Riemann-Roch space $ \mathcal{L}(G) $ is determined by constructing a related set of lattice points. Employing this result, we characterize the Weierstrass semigroups and the pure gaps with respect to several totally ramified places. In addition, we give an effective algorithm to compute the floor of the divisor $ G $. Finally, all these results lead us to find new codes with better parameters in comparison with the existing codes in the MinT's Tables~\cite{MinT}. A new record-giving $ [254,228,\geqslant 16] $-code over $ \mathbb{F}_{64} $ is presented as one of the examples.

The remaider of the paper is organized as follows. In Section~\ref{sec:Pre} we briefly recall some preliminary results over arbitrary function fields. Section~\ref{sec:Bases} focuses on the construction of bases for the Riemann-Roch space over Kummer extensions. In Section~\ref{sec:Weiersemipureg} we compute the Weierstrass semigroups and the pure gaps. Finally, in Section~\ref{sec:Examples} we construct multi-point codes with good parameters by employing our results.

\section{Preliminary results over arbitrary function fields}\label{sec:Pre}
Let $ q  $ be a power of a prime $p$ and $ \mathbb{F}_{q} $ be a finite field of cardinality $ q  $, with characteristic $ p $.  We denote by $ F $ a function field over $ \mathbb{F}_q$ and by
$ \mathbb{P}_F $ the set of places of $ F $. The free abelian group generated by the places of $ F $ is denoted by $ \mathcal{D}_F $, whose element is called a divisor. If a divisor $D$ is given by $ D=\sum_{ P \in {\mathbb{P}_F}} n_P P$ and almost all $ n_P=0 $, then the degree of $ D $ is $ \deg(D)= \sum_{P \in \mathbb{P}_F} n_P $.
For a function $ f \in F $, $ v_P(f) $ represents the valuation of $ f $ at a rational place $ P $. The divisor of $ f $ will be denoted by $ (f) $ and the divisor of poles of $ f $ will be denoted by $ (f)_{\infty} $.
 The Riemann-Roch vector space with respect to $ D $ is defined by
\[
\mathcal{L}(D)=\Big\{f \in F~\Big|~(f)+D \geqslant 0 \Big\} \cup \{0\}.
\]
 Let $ \ell(D) $ be the dimension of $ \mathcal{L}(D) $. From the Riemann-Roch Theorem, we know that
\[
\ell(D)-\ell(W-D)=1+g-\deg(D),
\]
 where $ W $ is the canonical divisor and $ g $ is the genus of the associated curve.

Let $ G $ be a divisor of $ F $ and let $ D:=Q_1+\cdots+Q_n $ be another divisor of $ F $ such that $ Q_1,\cdots,Q_n $ are distinct rational places, each not belonging to the support of $ G $. The AG codes
 $ C_{\mathcal{L}} $ and $ C_{\Omega} $ are defined as follows. The code
 $ C_{\mathcal{L}} $ is constructed from the Riemann-Roch space $ \mathcal{L}(G) $,
 \begin{equation*}
 C_{\mathcal{L}}:=\Big\{ (f(Q_1),\cdots,f(Q_n))~\Big|~f\in \mathcal{L}(G) \Big\} \subseteq \mathbb{F}_q^n.
 \end{equation*}
It is known that $ C_{\mathcal{L}} $ is an $ [n,k,d] $ code with parameters $ k=\ell(G)-\ell(G-D) $ and $ d \geqslant n-\deg(G) $. The code $ C_{\Omega} $
depends on the space of differentials $ \Omega(G-D) $,
\begin{equation*}
C_{\Omega}:= \Big\{ (\res_{Q_1}(\eta),\cdots,\res_{Q_n}(\eta))~\Big|~ \eta \in \Omega(G-D)  \Big\}.
\end{equation*}
Then $ C_{\Omega} $ is an $ [n,k_{\Omega},d_{\Omega}] $ code with parameters
$ k_{\Omega}=\ell(W+D-G)-\ell(W-G) $ and $ d_{\Omega}\geqslant \deg(G)-(2g-2) $. Under the hypothesis that $ \deg(G) > 2g-2 $, we have $ k_{\Omega}=\ell(W+D-G) \geqslant  n+g-1-\deg(G) $. If moreover $ 2g-2 < \deg(G) < n $ then \begin{equation}\label{eq:dimofCLk}
k_{\Omega}=n+g-1-\deg(G).
\end{equation}
The codes $ C_{\mathcal{L}} $ and $ C_{\Omega} $ are dual codes. We refer the reader to~\cite{stichtenoth2009algebraic} for more information.

We follow the notations in~\cite{matthews2004weierstrass}. Given $ l $ distinct rational places of $ F $, named $ Q_1,\cdots, Q_l $, the Weierstrass semigroup
$ H(Q_1,\cdots, Q_l) $ is defined by
\[
\Big\{(s_1,\cdots, s_l)\in \mathbb{N}_0^l~\Big|~\exists f\in F~ \text{with}~ (f)_{\infty}=\sum_{i=1}^l s_i Q_i  \Big\},
\]
and the Weierstrass gap set $  G(Q_1,\cdots, Q_l)  $ is defined by $ \mathbb{N}_0^l \backslash H(Q_1,\cdots, Q_l) $, where $ \mathbb{N}_0 := \mathbb{N}\cup \{0\} $ denotes the set of nonnegative integers.

An important subset of the Weierstrass gap set is the set of pure gaps.
Homma and Kim~\cite{Homma2001Goppa} introduced the pure gap set of two rational places. Carvalho and Torres~\cite{carvalho2005goppa} generalized this notion to several rational places, denoted by $ G_0(Q_1,\cdots, Q_l) $, which is given by
\begin{align*}
\Big\{&(s_1,\cdots,s_l)\in \mathbb{N}^l~\Big|~\ell(G) = \ell(G -Q_j ) ~\text{for}~1\leqslant j \leqslant l, \\
&~\text{where}~G=\sum_{i=1}^l s_iQ_i \Big\}.
\end{align*}
In addition, they showed that $ (s_1,\cdots,s_l)  $ is a pure gap at $ (Q_1,\cdots, Q_l) $ if and only if
\begin{align*}
\ell(s_1Q_1+\cdots+s_l Q_l)=\ell((s_1-1)Q_1+\cdots+(s_l-1) Q_l).
\end{align*}

The main motivation why one is interested in pure gap sets comes from constructing codes with excellent parameters, which will make use of the following theorem.

%\begin{thm}~\cite{Homma2001Goppa}\label{thm:Goppa}
%	Let $ Q_1 $ and $ Q_2 $ be rational places. For $ t_1, t_2\in \mathbb{N} $, assume that
%	\begin{align*}
%	\{(k_1, k_2) ~\Big|~ \beta \leqslant k_1 \leqslant \beta+t_1,
%	\gamma \leqslant k_2 \leqslant \gamma+t_2   \} \subseteq G_0(Q_1,Q_2).
%	\end{align*}
%	If $ G=(2\beta+t_1-1)Q_1 + (2\gamma+t_2-1)Q_2 $, then
%	\begin{align*}
%	d_{\Omega} \geqslant \deg(G)-(2g-2)+t_1 + t_2 +2.
%	\end{align*} 	
%\end{thm}

\begin{thm}[\cite{Homma2001Goppa,carvalho2005goppa}]\label{thm:Goppagener}
	Let $ Q_1 ,\cdots, Q_l $ be rational places of $ F $. For $ t_1,\cdots, t_l\in \mathbb{N} $, assume that
	\begin{align*}
	\Big\{(k_1,\cdots, k_l)\in \mathbb{N}^l ~\Big|~ \beta_1 \leqslant k_1 \leqslant \beta_1+t_1,
	\cdots, \beta_l \leqslant k_l \leqslant \beta_l+t_l   \Big\}
	\end{align*}
	is a subset of $  G_0(Q_1,\cdots, Q_l) $.
	If $ G=\sum_{i=1}^l(2\beta_i+t_i-1)Q_i $, then
	\begin{align*}
	d_{\Omega} \geqslant \deg(G)-(2g-2)+\sum_{i=1}^l t_i +l.
	\end{align*} 	
\end{thm}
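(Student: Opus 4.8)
The plan is to bound $d_\Omega$ from below by examining a Weil differential that realizes a nonzero codeword of minimal weight, and to reduce the claim to a statement about the degree of an effective divisor. Let $D$ denote the evaluation divisor of $C_\Omega$, whose support avoids $\{Q_1,\ldots,Q_l\}=\supp G$. A nonzero codeword of weight $w$ is the residue vector of some $0\neq\eta\in\Omega(G-D)$ whose residue is nonzero at exactly $w$ places of $\supp D$; collecting these places into an effective divisor $T\leqslant D$ of degree $w$, one checks that $\eta$ has a simple pole with nonzero residue at each place of $\supp T$ and is regular at the remaining places of $\supp D$, so that $(\eta)=G-T+E$ for an effective divisor $E$ with $v_P(E)=0$ on $\supp T$ and $v_{Q_i}(E)=v_{Q_i}(\eta)-(2\beta_i+t_i-1)\geqslant 0$. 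Since $\deg(\eta)=2g-2$, one gets $\deg E=(2g-2)+w-\deg G$, so the asserted bound $d_\Omega\geqslant \deg G-(2g-2)+\sum_{i=1}^l t_i+l$ is equivalent to the inequality $\deg E\geqslant \sum_{i=1}^l(t_i+1)$, which I must establish for every such $\eta$.

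The heart of the argument is to convert the box of pure gaps into a lower bound on the vanishing of $\eta$ at the places $Q_i$. The reformulation recorded before the statement shows that $(k_1,\ldots,k_l)\in G_0(Q_1,\ldots,Q_l)$ is equivalent to $\ell\big(\sum_i k_iQ_i\big)=\ell\big(\sum_i(k_i-1)Q_i\big)$, and hence to the existence, for every index $j$, of a regular differential $\omega^{(j)}$ vanishing to order exactly $k_j-1$ at $Q_j$ and to order at least $k_i$ at each other $Q_i$. I would proceed by contradiction, assuming $\deg E\leqslant \sum_i(t_i+1)-1$. The idea is to use $\eta$ together with the differentials $\omega^{(j)}$ attached to points of the box to build a rational function whose only poles lie among $Q_1,\ldots,Q_l$ and whose pole orders fall inside the prescribed ranges $\beta_i\leqslant k_i\leqslant\beta_i+t_i$; such a function exhibits some lattice point of the box as a non-gap and contradicts the equality $\ell\big(\sum_i k_iQ_i\big)=\ell\big(\sum_i(k_i-1)Q_i\big)$. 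This contradiction would force $v_{Q_i}(E)\geqslant t_i+1$ for each $i$, and therefore $\deg E\geqslant\sum_i(t_i+1)$.

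Finally, combining $\deg E\geqslant\sum_i(t_i+1)=\sum_i t_i+l$ with $\deg E=(2g-2)+w-\deg G$ yields $w\geqslant \deg G-(2g-2)+\sum_i t_i+l$ for every nonzero codeword, which is the bound on $d_\Omega$. The step I expect to be the main obstacle is the simultaneous control of the auxiliary differentials at all $l$ places at once: a single regular differential furnished by a pure gap pins down the order at one $Q_j$ exactly but only bounds the orders at the remaining $Q_i$ from below, so a naive quotient $\eta/\omega^{(j)}$ may acquire uncontrolled poles at the other places. Overcoming this requires choosing, for each coordinate, a differential (or a linear combination of such) whose orders at \emph{all} the $Q_i$ are simultaneously prescribed within the box, which is exactly where the full strength of the hypothesis that the entire box, not merely its corners, lies in $G_0(Q_1,\ldots,Q_l)$ is used. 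An alternative that sidesteps part of this difficulty is induction on $\sum_i t_i$: peeling off one unit from a single coordinate reduces the box to a smaller one and uses only a single pure-gap relation $\ell(\,\cdot\,)=\ell(\,\cdot\,-Q_j)$ at each stage, at the cost of additional bookkeeping to track how the code and its parameters change.
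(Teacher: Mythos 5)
The paper offers no proof of this statement for you to be compared against: it is quoted as Theorem~1, with attribution, from \cite{Homma2001Goppa,carvalho2005goppa}, and is used later as a black box. So your proposal has to be judged against the arguments in those references. Your first paragraph is correct and is the standard outer reduction: a nonzero codeword of weight $w$ is the residue vector of some $\eta\in\Omega(G-D)$ whose residues are nonzero exactly on an effective $T\leqslant D$ of degree $w$; since $v_P(\eta)\geqslant -1$ on $\supp D$ and a simple pole of a differential forces a nonzero residue, indeed $(\eta)=G-T+E$ with $E\geqslant 0$ disjoint from $T$, $\deg E=(2g-2)+w-\deg G$, and the asserted bound on $d_{\Omega}$ is equivalent to $\deg E\geqslant\sum_{i=1}^l(t_i+1)$ for every such $\eta$.

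The genuine gap is that this last inequality is the entire content of the theorem, and it is exactly the step you leave open; moreover, the repairs you sketch do not address the real obstruction. Write $E=\sum_i e_iQ_i+E_0$ with $E_0\geqslant 0$ supported away from the $Q_i$. To manufacture a function with poles only among the $Q_i$ one must take $f=\omega/\eta$ with $\omega$ a regular differential (your quotient $\eta/\omega^{(j)}$ is hopeless in principle: its polar divisor contains the zero divisor of $\omega^{(j)}$, of degree roughly $2g-2$, which no hypothesis controls); then the poles of $f$ lie among the $Q_i$ and $\supp E_0$. Your proposed fix --- choosing $\omega$ with orders prescribed at \emph{all} the $Q_i$ simultaneously --- removes only the poles at the other $Q_i$ and does nothing about $\supp E_0$, which is where the construction actually breaks. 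The published proofs (Garcia--Kim--Lax for $l=1$, Homma--Kim for $l=2$, Carvalho--Torres in general) require in addition that $\omega$ vanish on $E_0$, i.e.\ they need a differential in $\Omega\bigl(\sum_i k_iQ_i-Q_j+E_0\bigr)$ with order exactly $k_j-1$ at $Q_j$; such an $\omega$ need not exist for a given box point, because imposing vanishing on $E_0$ can destroy a gap. Its existence for \emph{some} well-placed box point is what the real work establishes: the contradiction hypothesis bounds $\deg E_0$, the dimension $\ell(\,\cdot+E_0)$ can jump at most $\deg E_0$ times along a coordinate segment of the box, so enough ``gaps relative to $E_0$'' survive, and a pigeonhole argument positions one of them so that the pole vector of $f=\omega/\eta$ lands back inside the box of pure gaps --- the desired contradiction. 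The role of $E_0$, the notion of gaps relative to $E_0$, the counting, and the pigeonhole are all absent from your proposal. (A minor logical slip as well: refuting $\deg E\leqslant\sum_i(t_i+1)-1$ yields only the degree bound, not the per-place bounds $v_{Q_i}(E)\geqslant t_i+1$ you claim to recover; fortunately the degree bound is all that is needed.)
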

The following lemma, which is an easy generalization of a result due to Kim~\cite{kim}, provides us with a way to calculate the Weierstrass semigroups.
\begin{lem}
	\label{lem:Weierstsemi}
	For rational places $ Q_1,\cdots,Q_l $ with
	$ 1 \leqslant l \leqslant r $, then $ H(Q_1,\cdots,Q_l) $ is given by
	\begin{align*}
	\Big\{&(s_1,\cdots,s_l)\in \mathbb{N}_0^l~\Big|~\ell(G) \neq \ell(G -Q_j ) ~\text{for}~1\leqslant j \leqslant l, \\
	&~\text{where}~G=\sum_{i=1}^l s_iQ_i \Big\}.
	\end{align*}

\end{lem}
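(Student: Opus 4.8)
The plan is to prove the two inclusions between $ H(Q_1,\dots,Q_l) $ and the stated set separately, the forward inclusion being immediate and the reverse one carrying the real content. Throughout I write $ G=\sum_{i=1}^{l}s_iQ_i $ and use the standard fact that, since each $ Q_j $ is rational, $ \ell(G-Q_j)\leqslant\ell(G)\leqslant\ell(G-Q_j)+1 $; hence $ \ell(G)\neq\ell(G-Q_j) $ is equivalent to the existence of a function in $ \mathcal{L}(G)\setminus\mathcal{L}(G-Q_j) $, that is, a function with poles only among $ Q_1,\dots,Q_l $, of pole order at most $ s_i $ at each $ Q_i $ and of pole order \emph{exactly} $ s_j $ at $ Q_j $.

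For the inclusion ``$ \subseteq $'', suppose $ (s_1,\dots,s_l)\in H(Q_1,\dots,Q_l) $ and pick $ f\in F $ with $ (f)_\infty=\sum_{i=1}^{l}s_iQ_i $. Then $ G=(f)_\infty $ gives $ f\in\mathcal{L}(G) $, while $ v_{Q_j}(f)=-s_j $ gives $ f\notin\mathcal{L}(G-Q_j) $, so $ \ell(G)\neq\ell(G-Q_j) $ for every $ j $, as required.

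For the reverse inclusion I would argue constructively. Assuming $ \ell(G)\neq\ell(G-Q_j) $ for each $ j $, the remark above lets me choose, for every $ j $, a function $ f_j\in\mathcal{L}(G) $ with $ v_{Q_j}(f_j)=-s_j $ exactly. The aim is to produce a single $ f $ with $ v_{Q_i}(f)=-s_i $ for all $ i $ at once, for then $ (f)_\infty=\sum s_iQ_i $ and $ (s_1,\dots,s_l)\in H(Q_1,\dots,Q_l) $. I would seek $ f $ as an $ \mathbb{F}_q $-linear combination $ f=\sum_{j=1}^{l}c_jf_j $. Writing $ a_{ji} $ for the coefficient of $ t_i^{-s_i} $ in the local expansion of $ f_j $ at $ Q_i $ (with $ t_i $ a uniformizer), the corresponding coefficient for $ f $ is the linear form $ \sum_j c_j a_{ji} $, and $ v_{Q_i}(f)=-s_i $ holds exactly when this form is nonzero. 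Since $ a_{jj}\neq 0 $ for each $ j $, each of these $ l $ forms is not identically zero, so it remains to choose $ (c_1,\dots,c_l) $ off the $ l $ hyperplanes on which the forms vanish.

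The main obstacle will be exactly this last step: the $ l $ leading-coefficient conditions have to be met simultaneously, and over a small field the union of the $ l $ hyperplanes can in principle exhaust the coefficient space. I expect to resolve it by a counting argument valid once $ q $ is large enough relative to $ l $, reinforced by Kim's observation that whenever a combination cancels the leading term at some $ Q_i $, one of the $ f_j $ already has a pole of exact order $ s_i $ there and may be used directly. The most transparent packaging is an induction on $ l $: given $ g $ with exact pole orders $ s_1,\dots,s_k $ at $ Q_1,\dots,Q_k $, I would pass to $ g+c\,f_{k+1} $ and pick $ c $ fixing the pole at $ Q_{k+1} $ while avoiding the finitely many values that would cancel a leading term at some earlier $ Q_i $; the base case $ l=2 $ is precisely Kim's original argument.
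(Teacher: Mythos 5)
The paper never actually proves this lemma --- it is stated as ``an easy generalization of a result due to Kim''~\cite{kim} and used as a black box --- so your attempt must be judged on its own merits, and on those merits the reverse inclusion has a genuine gap: exactly the step you flag, and it is worse than you suggest. (The forward inclusion also has a small slip: when $s_j=0$, the equality $(f)_\infty=\sum_i s_iQ_i$ only forces $v_{Q_j}(f)\geqslant 0$, not $v_{Q_j}(f)=0$, so your chosen $f$ may lie in $\mathcal{L}(G-Q_j)$; since each condition $\ell(G)\neq\ell(G-Q_j)$ only needs its own witness, one of $f$, $f+1$ always works, and this is easily repaired.) For the reverse inclusion, both of your proposed resolutions --- the naive count of points on a union of $l$ hyperplanes, and the induction in which $c$ avoids at most $l-1$ nonzero values in $\mathbb{F}_q^*$ --- require $q>l$. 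But the lemma's stated range is $l\leqslant r$, and in this paper $r$ can be as large as $q$ (the $\alpha_i$ are distinct elements of $\mathbb{F}_q$, e.g.\ $f(x)=x^q-x$), so the case $l=q$ is inside the lemma's scope and outside the reach of your argument. Moreover, the dependence on the field size is not a technicality you can hope to remove: take $q=2$, $m=3$, $\lambda=1$, $f(x)=x^2+x$, i.e.\ the genus-$1$ Kummer curve $y^3=x^2+x$ over $\mathbb{F}_2$, with totally ramified places $P_1,P_2,P_\infty$ and $G=P_1+P_2+P_\infty$. Here $(y)=P_1+P_2-2P_\infty$, $(x)=3P_1-3P_\infty$, and $\mathcal{L}(G)$ has basis $1,\,1/y,\,x/y$, so $\ell(G)=3\neq 2=\ell(G-P)$ for each $P\in\{P_1,P_2,P_\infty\}$. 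Yet $h=a+b/y+c\,x/y$ has a pole at $P_1$ iff $b\neq 0$, at $P_\infty$ iff $c\neq 0$, and at $P_2$ iff $b+c\neq 0$ (the functions $1/y$ and $x/y$ have the same polar part at $P_2$ because $v_{P_2}((x-1)/y)=2$); over $\mathbb{F}_2$ these three conditions are incompatible, so $(1,1,1)\notin H(P_1,P_2,P_\infty)$ although the $\ell$-conditions all hold. This example has $l=3>q$, so it respects the lemma's restriction $l\leqslant r$, but it kills any proof that never invokes that restriction --- in particular the one you outline, which nowhere uses $l\leqslant r\leqslant q$.

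What closes the gap, on exactly the stated range, is a sharper fact than counting: a vector space over $\mathbb{F}_q$ cannot be written as a union of $q$ or fewer proper subspaces. Apply it directly to $V=\mathcal{L}(G)$ and the subspaces $V_j=\mathcal{L}(G-Q_j)$, which are proper by hypothesis: since $l\leqslant r\leqslant q$, there exists $f\in V\setminus\bigcup_{j=1}^{l}V_j$, and any such $f$ has $v_{Q_j}(f)=-s_j$ when $s_j\geqslant 1$ and $v_{Q_j}(f)=0$ when $s_j=0$, i.e.\ $(f)_\infty=G$; this bypasses the basis functions $f_j$ and the hyperplane geometry entirely. Alternatively, your induction can be pushed to the same threshold $q\geqslant l$ by using Kim's observation quantitatively rather than rhetorically: at the step from $k$ to $k+1$, either every leading coefficient of $f_{k+1}$ at $Q_1,\dots,Q_k$ is nonzero, in which case $f_{k+1}$ itself is the desired function, or at most $k-1$ nonzero values of $c$ are forbidden, and $|\mathbb{F}_q^*|=q-1\geqslant k$ suffices to pick $c$. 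Either way, the indispensable missing ingredient in your proposal is the hypothesis $l\leqslant r$ (hence $l\leqslant q$); as written, your plan proves only a weaker lemma carrying an extra largeness assumption on $q$, not the statement in the paper.
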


\section{Bases for Riemann-Roch spaces over Kummer extensions}\label{sec:Bases}

Let $ m  \geqslant 2 $ and $ \gcd(p,m)=1 $. In this paper, we consider a Kummer extension $F_{\mathcal{K}}/{\mathbb{F}_q(x)}$ defined by
$
y^m=f(x)^{\lambda}=\prod_{i=1}^{r}(x-\alpha_i)^{\lambda}
$,
where $ \gcd(m,r \lambda)=1 $,
$ \alpha_i \in \mathbb{F}_q$ and the $ \alpha_i $'s are pairwise distinct for $ 1\leqslant i\leqslant r $. The function field $ F_{\mathcal{K}} $ has
 genus $ g= (r-1)(m-1)/2$. Let $ P_1,\cdots,P_r $ be the places of the rational function field $ F_{\mathcal{K}} $ associated to the zeros of $ x-\alpha_1,\cdots,x-\alpha_r $, respectively, and $ P_{\infty} $ be the unique place at infinity. It follows from~\cite{Stichtenoth} that they are totally ramified in this extension.

 The following proposition describes some principle divisors of a Kummer extension.
\begin{prop}\label{prop:divisor}
	Let $F_{\mathcal{K}}/{\mathbb{F}_q(x)}$ be a Kummer extension given by
	\begin{equation}\label{eq:Kumext1}
	y^m=f(x)^{\lambda}=\prod_{i=1}^{r}(x-\alpha_i)^{\lambda},
	\end{equation}
	 where $ \alpha_i \in \mathbb{F}_q $ and $ \text{gcd}(m, r\lambda) =1 $. Then we have the following divisors in $F$:
	\begin{enumerate}
		\item	$ (x-\alpha_i)=mP_i-mP_{\infty}$,  for $ 1\leqslant i\leqslant r $,
		\item $ (y)=\lambda P_1+\cdots +\lambda P_r-r\lambda P_{\infty} $,
		\item $(f(x))=\sum_{i=1}^r mP_i-rmP_{\infty}$.
	\end{enumerate}	
	
\end{prop}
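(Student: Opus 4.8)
The plan is to reduce everything to elementary divisor computations in the rational subfield $\mathbb{F}_q(x)$ and then transport them up to $F_{\mathcal{K}}$ through the ramification data already recorded, namely that $P_1,\dots,P_r$ and $P_\infty$ are totally ramified of degree $m$ over the corresponding places of $\mathbb{F}_q(x)$. Write $\mathfrak{p}_i$ for the zero of $x-\alpha_i$ and $\mathfrak{p}_\infty$ for the pole of $x$ in $\mathbb{F}_q(x)$; in the base field one has the standard principal divisors $(x-\alpha_i)=\mathfrak{p}_i-\mathfrak{p}_\infty$ and, summing over $i$, $(f(x))=\sum_{i=1}^r\mathfrak{p}_i-r\mathfrak{p}_\infty$. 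Total ramification means $\mathfrak{p}_i$ has the single place $P_i$ above it with $e(P_i\mid\mathfrak{p}_i)=m$, and likewise $e(P_\infty\mid\mathfrak{p}_\infty)=m$, while every other place of $F_{\mathcal{K}}$ lies above some $\mathfrak{p}\notin\{\mathfrak{p}_1,\dots,\mathfrak{p}_r,\mathfrak{p}_\infty\}$.

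For item (1), I would use the fundamental transfer of valuations: for a place $P$ of $F_{\mathcal{K}}$ lying over a place $\mathfrak{p}$ of $\mathbb{F}_q(x)$ and any $u\in\mathbb{F}_q(x)$ one has $v_P(u)=e(P\mid\mathfrak{p})\,v_{\mathfrak{p}}(u)$. Applying this to $u=x-\alpha_i$ shows $v_{P_i}(x-\alpha_i)=m\cdot1=m$ and $v_{P_\infty}(x-\alpha_i)=m\cdot(-1)=-m$, while $v_P(x-\alpha_i)=0$ at every other place, since $x-\alpha_i$ is a unit at every $\mathfrak{p}\neq\mathfrak{p}_i,\mathfrak{p}_\infty$. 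This yields $(x-\alpha_i)=mP_i-mP_\infty$. Item (3) then follows at once by additivity of the principal-divisor map: summing the identities of item (1) over $1\leqslant i\leqslant r$ gives $(f(x))=\sum_{i=1}^r(mP_i-mP_\infty)=\sum_{i=1}^r mP_i-rmP_\infty$.

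For item (2), the idea is to avoid recomputing ramification and instead exploit the defining relation. Taking divisors of both sides of $y^m=f(x)^\lambda$ gives $m\,(y)=\lambda\,(f(x))=\lambda\big(\sum_{i=1}^r mP_i-rmP_\infty\big)=m\big(\sum_{i=1}^r\lambda P_i-r\lambda P_\infty\big)$. Since $\mathcal{D}_{F_{\mathcal{K}}}$ is a free abelian group, hence torsion-free, I may cancel the factor $m$ to conclude $(y)=\lambda P_1+\cdots+\lambda P_r-r\lambda P_\infty$, as desired.

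The computation is routine once the ramification is in hand; the only points that need genuine care are verifying that $x-\alpha_i$ has no zeros or poles outside $\{P_i,P_\infty\}$ — which rests on the fact that a nonconstant element of the base field affects an upstairs valuation only through the place it sits over — and justifying the cancellation of $m$ in item (2) by torsion-freeness of the divisor group rather than by a direct valuation count at each $P_i$. As a consistency check I would confirm that each divisor has degree $0$: since $P_i$ and $P_\infty$ are rational, $\deg(mP_i-mP_\infty)=0$, $\deg((f(x)))=rm-rm=0$, and $\deg((y))=r\lambda-r\lambda=0$, exactly as required of principal divisors.
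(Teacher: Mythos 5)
Your proof is correct. The paper gives no proof of this proposition at all---it records these divisors as standard facts, with the total ramification of $P_1,\dots,P_r,P_\infty$ cited from Stichtenoth---so your argument supplies exactly the details the paper takes for granted: the transfer rule $v_P(u)=e(P\mid\mathfrak{p})\,v_{\mathfrak{p}}(u)$ under total ramification gives (1), additivity of the principal-divisor map gives (3), and your derivation of (2) from the defining equation $y^m=f(x)^{\lambda}$ by cancelling $m$ in the torsion-free divisor group is a clean (and valid) alternative to computing $v_{P_i}(y)$ and $v_{P_\infty}(y)$ place by place.
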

Let $ G:=\sum_{\mu=1}^r s_{\mu}P_{\mu} + t P_{\infty} $. Maharaj~\cite{maharaj2004code} showed that  the Riemann-Roch space
$ \mathcal{L}(G) $ can be decomposed as a direct sum of Riemann-Roch
spaces of divisors of the projective line. For applications to computing pure gaps, we would like to give an explicit basis of $ \mathcal{L}(G) $, which consists of monomials of $ r $ elements. Actually, we generalize the result of~\cite{Maharaj2005riemann} concerning about the basis of Hermitian curves.

Since $ \text{gcd}(m, \lambda) =1 $, there exist integers $ A $ and  $ B $ such that $ A \lambda+Bm =1 $, and thus, if we denote $ z=y^A f(x)^B $, then we obtain
\begin{equation}\label{eq:divz}
 (z)=  P_1+\cdots + P_r-r P_{\infty} .
\end{equation}
Suppose that $ i,j_2,j_3,\cdots,j_r \in \mathbb{Z} $, we define
\begin{align}\label{eq:Eijr}
 E_{i,j_2,j_3,\cdots,j_r} := z^i (x-\alpha_2)^{j_2} (x-\alpha_3)^{j_3} \cdots (x-\alpha_r)^{j_r}.
\end{align}
By Proposition~\ref{prop:divisor} and Equation~\eqref{eq:divz}, one can compute the divisor of $ E_{i,j_2,j_3,\cdots,j_r} $:
\begin{align}
(E_{i,j_2,j_3,\cdots,j_r})=&iP_1 + (i+m j_2) P_2
+ \cdots +(i+m j_r)P_r \nonumber \\
& -\left( ri+m(j_2+\cdots+j_r) \right) P_{\infty} .\label{eq:E}
\end{align}

For later use, we denote by $ \lfloor x \rfloor $ the largest integer not greater than $ x $ and by $ \lceil x \rceil $ the smallest integer not less than $ x $. It is easy to show
that $ j = \left\lceil \dfrac{\alpha}{\beta} \right\rceil $ if and only if
$ 0 \leqslant \beta j -\alpha < \beta $, where $ \beta \in \mathbb{Z}^+ $ and $ \alpha \in \mathbb{Z} $.

Let us denote the lattice point set
\begin{align*}
\Omega_{s_1,\cdots,s_r, t} := \Big\{& (i,j_2,j_3,\cdots,j_r)
~\Big|~ ~i+s_1 \geqslant 0, \\
&~ 0 \leqslant i + mj_{\mu} + s_{\mu} < m ~~\text{for} ~~ \mu =2,\cdots, r,\\
&~ri+m(j_2+\cdots+j_r)   \leqslant t
\Big\},
\end{align*}
or equivalently,
\begin{align}
\Omega_{s_1,\cdots,s_r, t} := \Big\{& (i,j_2,j_3,\cdots,j_r)
~\Big|~i+s_1 \geqslant 0, \nonumber \\
&~ j_{\mu} = \left \lceil \frac{-i-s_{\mu}}{m} \right \rceil  ~~\text{for} ~~ \mu =2,\cdots, r,\nonumber \\
&~ri+m(j_2+\cdots+j_r)   \leqslant t
\Big\}.\label{eq:omegarceiljmu}
\end{align}

The following lemma is crucial for the proof of
our key result. However, the proof of this lemma is technical, and will be completed later.

\begin{lem}\label{thm:omega}
	The number of lattice points in $ \Omega_{s_1,\cdots,s_r, t} $ can be expressed as:
	\begin{equation*}
	\#\Omega_{s_1,\cdots,s_r, t}= 1-g+s_1+\cdots+s_r+t,
	\end{equation*}
	for $ s_1+\cdots+s_r+ t\geqslant(2r-1)m $.
\end{lem}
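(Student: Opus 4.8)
The plan is to reduce the count to a one-parameter problem and then evaluate it residue class by residue class. In the description \eqref{eq:omegarceiljmu} each lattice point $(i,j_2,\ldots,j_r)$ is completely determined by its first coordinate $i$, since $j_\mu = \lceil (-i-s_\mu)/m\rceil$ is forced. Hence $\#\Omega_{s_1,\ldots,s_r,t}$ equals the number of integers $i$ with $i \geqslant -s_1$ satisfying the final inequality $ri + m(j_2+\cdots+j_r)\leqslant t$. Using the stated ceiling property I would write $mj_\mu = -i - s_\mu + \delta_\mu$, where $\delta_\mu := (i+s_\mu)\bmod m$ satisfies $0 \leqslant \delta_\mu < m$; substituting this into the inequality collapses it to the single linear constraint
\[
i + \sum_{\mu=2}^{r}\delta_\mu \leqslant t + (s_2+\cdots+s_r).
\]

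Next I would split the count according to the residue $a := i\bmod m$, $0\leqslant a < m$. On a fixed residue class each $\delta_\mu$ is the constant $(a+s_\mu)\bmod m$, so the constraint becomes the single upper bound $i \leqslant t + (s_2+\cdots+s_r) - \Delta(a)$ with $\Delta(a) := \sum_{\mu=2}^{r}\big((a+s_\mu)\bmod m\big)$. Counting the integers $i \equiv a \pmod m$ in the interval $[-s_1,\, t+(s_2+\cdots+s_r)-\Delta(a)]$ and simplifying the floor/ceiling expressions (using $(a+s_\mu)\bmod m = a+s_\mu - m\lfloor (a+s_\mu)/m\rfloor$ and $-\lceil (-s_1-a)/m\rceil = \lfloor (s_1+a)/m\rfloor$) should yield the clean per-class count
\[
C(a) = \Big\lfloor \tfrac{t-ra}{m}\Big\rfloor + \sum_{\mu=1}^{r}\Big\lfloor \tfrac{a+s_\mu}{m}\Big\rfloor + 1 .
\]

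Summing $C(a)$ over $a = 0,\ldots,m-1$ then finishes the computation through two elementary identities. First, since $a+s_\mu$ runs over $m$ consecutive integers as $a$ does, $\sum_{a=0}^{m-1}\lfloor (a+s_\mu)/m\rfloor = s_\mu$; this contributes $s_1+\cdots+s_r$. Second, because $\gcd(m,r)=1$ (which follows from $\gcd(m,r\lambda)=1$), the residues $ra\bmod m$ form a complete system, and separating $\lfloor (t-ra)/m\rfloor = \lfloor (t-(ra\bmod m))/m\rfloor - \lfloor ra/m\rfloor$ gives $\sum_{a=0}^{m-1}\lfloor (t-ra)/m\rfloor = (t-m+1)-g$, where the classical Gauss sum $\sum_{a=1}^{m-1}\lfloor ra/m\rfloor = (r-1)(m-1)/2 = g$ produces the genus. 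Adding the constant term $\sum_{a=0}^{m-1}1 = m$ gives $\#\Omega_{s_1,\ldots,s_r,t} = m + (s_1+\cdots+s_r) + (t-m+1-g) = 1-g+s_1+\cdots+s_r+t$.

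The one point that genuinely needs the hypothesis $s_1+\cdots+s_r+t \geqslant (2r-1)m$ is the exactness of the per-class count: the interval-counting formula for $C(a)$ is valid only when that interval actually contains a representative of the class $a$, i.e.\ when its length $t+(s_2+\cdots+s_r)-\Delta(a)+s_1$ is at least $m-1$; otherwise some $C(a)$ would have to be truncated to $0$ and the telescoping above would break. Since $\Delta(a)\leqslant (r-1)(m-1)$, the stated bound comfortably forces every interval to have length $\geqslant m-1$, so no class is empty and the summation is exact. I expect this verification, together with the careful reduction of the ceiling conditions to the linear bound on $i$, to be the main bookkeeping obstacle, while the arithmetic heart of the statement is simply the Gauss sum evaluating to $g$.
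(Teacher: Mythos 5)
Your argument is correct, and it takes a genuinely different route from the paper's proof. You exploit the fact that a point of $\Omega_{s_1,\cdots,s_r,t}$ is determined by its first coordinate $i$, collapse the ceiling conditions into the single inequality $i+\sum_{\mu=2}^{r}\delta_\mu\leqslant t+(s_2+\cdots+s_r)$, and count the admissible $i$ one residue class modulo $m$ at a time, so that the genus emerges from the classical Gauss sum $\sum_{a=1}^{m-1}\lfloor ra/m\rfloor=(r-1)(m-1)/2$ (valid because $\gcd(r,m)=1$). The paper instead argues structurally through a chain of lemmas: a planar lattice count $\#\Psi(t)=1-g+t$ proved by triangle counting (Lemma~\ref{lem:PsiR}), a bijection $\Omega_{0,\cdots,0,t}\cong\Psi(t)$ (Lemma~\ref{lem:omega0000t}), a telescoping set-difference recursion giving $\#\Omega_{s_1,0,s_3,\cdots,s_r,t}=\#\Omega_{0,\cdots,0,t}+s_1+s_3+\cdots+s_r$ (Lemma~\ref{lem:omegas2=0}), and a final change of variables reducing arbitrary $(s_1,\cdots,s_r)$ to that normalized case. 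Your computation is shorter and self-contained, it makes the appearance of the genus completely transparent, and it pinpoints exactly where the hypothesis $s_1+\cdots+s_r+t\geqslant(2r-1)m$ enters: since $\Delta(a)\leqslant(r-1)(m-1)$, every per-class interval has length at least $rm+r-1\geqslant m-1$, so each count $C(a)=\lfloor(t-ra)/m\rfloor+\sum_{\mu=1}^{r}\lfloor(a+s_\mu)/m\rfloor+1$ is exact rather than truncated at $0$; your route even shows the numerical hypothesis could be weakened. What the paper's longer route buys is reusable machinery --- the shifts, complements and bijections of lattice sets reappear in the proofs of Lemmas~\ref{lem:omeganoorder} and~\ref{lem:omegawith1} --- together with the geometric picture of Fig.~\ref{fig:Psi_t}. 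The supporting identities you invoke, namely $m\lceil(-i-s_\mu)/m\rceil=-i-s_\mu+((i+s_\mu)\bmod m)$, $\sum_{n=s}^{s+m-1}\lfloor n/m\rfloor=s$, and $\lfloor(t-ra)/m\rfloor=\lfloor(t-(ra\bmod m))/m\rfloor-\lfloor ra/m\rfloor$, all check out, as does the bijectivity of $a\mapsto ra\bmod m$ used to apply the second identity.
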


Now we can easily prove the main result of this section.
\begin{thm}\label{thm:basis1}
	Let $ G:=\sum_{\mu=1}^r s_{\mu}P_{\mu} + t P_{\infty} $. The elements $ E_{i,j_2,j_3,\cdots,j_r} $ with $ (i,j_2,j_3,\cdots,j_r) \in \Omega_{s_1,\cdots,s_r, t} $ form a basis for the Riemann-Roch space
	$ \mathcal{L}(G) $. Moreover, we have
	$ \ell(G) = \# \Omega_{s_1,\cdots,s_r, t}$.
\end{thm}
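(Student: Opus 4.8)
The plan is to prove that the $E_{i,j_2,\dots,j_r}$ with exponent tuples in $\Omega_{s_1,\dots,s_r,t}$ constitute a basis of $\mathcal{L}(G)$ by establishing the two standard ingredients: that each such $E$ lies in $\mathcal{L}(G)$, and that these elements are linearly independent; the dimension count will then follow from Lemma~\ref{thm:omega}.

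First I would verify membership: for $(i,j_2,\dots,j_r)\in\Omega_{s_1,\dots,s_r,t}$, I must check $(E_{i,j_2,\dots,j_r})+G\geqslant 0$. Using the divisor formula~\eqref{eq:E}, the coefficient of $P_1$ in $(E)+G$ is $i+s_1$, which is nonnegative by the first defining condition of $\Omega$; the coefficient of $P_\mu$ for $2\leqslant\mu\leqslant r$ is $i+mj_\mu+s_\mu$, nonnegative by the condition $0\leqslant i+mj_\mu+s_\mu<m$; and the coefficient of $P_\infty$ is $t-\bigl(ri+m(j_2+\cdots+j_r)\bigr)$, nonnegative by the last condition $ri+m(j_2+\cdots+j_r)\leqslant t$. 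Since all other places have zero coefficient, this shows $E_{i,j_2,\dots,j_r}\in\mathcal{L}(G)$.

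Next I would prove linear independence. The natural idea is to show these functions have pairwise distinct valuations at one of the ramified places, say $P_1$ or $P_\infty$, so that no nontrivial $\mathbb{F}_q$-linear combination can vanish. The obstacle here is that two different tuples can share the same value of $i$, so the $P_1$-valuation $i$ alone does not separate them. The clean way around this is to observe that, by the strict inequality $0\leqslant i+mj_\mu+s_\mu<m$, the index $i$ together with the residues determines the $j_\mu$ uniquely via $j_\mu=\lceil(-i-s_\mu)/m\rceil$, exactly as in~\eqref{eq:omegarceiljmu}; hence a tuple in $\Omega$ is completely determined by its value of $i$. Consequently distinct tuples have distinct $i$, and therefore distinct $P_1$-valuations equal to $i$, which forces linear independence over $\mathbb{F}_q$. (Alternatively one argues directly that the $E$'s for a fixed $i$ are scalar multiples of a single function modulo the constraints, so there is really one basis element per admissible $i$.)

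Finally I would assemble the conclusion. The linearly independent set $\{E_{i,j_2,\dots,j_r}\}$ sits inside $\mathcal{L}(G)$, so $\ell(G)\geqslant\#\Omega_{s_1,\dots,s_r,t}$. To obtain equality I invoke Lemma~\ref{thm:omega}, which gives $\#\Omega_{s_1,\dots,s_r,t}=1-g+s_1+\cdots+s_r+t=1-g+\deg(G)$ in the range $s_1+\cdots+s_r+t\geqslant(2r-1)m$; since $\deg(G)=s_1+\cdots+s_r+t$ is then large (in particular $>2g-2$), the Riemann--Roch theorem yields $\ell(G)=1-g+\deg(G)$, matching the count exactly. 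Thus the independent family already has the full dimension and hence is a basis, and $\ell(G)=\#\Omega_{s_1,\dots,s_r,t}$. \emph{The main obstacle} is the linear independence step, specifically recognizing that the ceiling condition makes each admissible $i$ correspond to a unique lattice point, so that valuations at a single ramified place suffice to separate the basis elements; for the general case where $\deg(G)$ is not large, one would reduce to the large-degree case by adding a suitable multiple of $P_\infty$ and tracking how both $\ell(G)$ and $\#\Omega$ change.
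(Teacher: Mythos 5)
Your verification of membership, your linear-independence argument via the distinct $P_1$-valuations $v_{P_1}(E_{i,j_2,\cdots,j_r})=i$ (using that the ceiling conditions make $i$ determine the whole tuple), and your treatment of the large-degree case via Lemma~\ref{thm:omega} and Riemann--Roch all coincide with the paper's proof. The genuine gap is the general case, which you relegate to a closing remark. The theorem is stated for arbitrary $G=\sum_{\mu}s_\mu P_\mu+tP_\infty$, and it is precisely the divisors of small degree (in particular $\deg G\leqslant 2g-2$) that are needed later for the Weierstrass semigroup and pure-gap computations, so this case cannot be dismissed. Moreover, your proposed reduction --- add a large multiple of $P_\infty$ and ``track how $\ell(G)$ and $\#\Omega$ change'' --- is not carried out, and as stated it requires an ingredient you never establish: to cut $\mathcal{L}(G)$ out of $\mathcal{L}(G+NP_\infty)$ by the condition $v_{P_\infty}(f)\geqslant -t$, you need $v_{P_\infty}(f)$ to equal the minimum of the $P_\infty$-valuations of the basis elements appearing in the expansion of $f$, which in turn requires that distinct lattice points give distinct pole orders $ri+m(j_2+\cdots+j_r)$ at $P_\infty$. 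That statement is true, but it is a separate small argument (writing $i=i_0+mk$ with $0\leqslant i_0<m$, the pole order is congruent to $ri_0$ modulo $m$ and increases by $mk$, so injectivity follows from $\gcd(r,m)=1$), not a consequence of anything you proved.

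The paper closes the gap differently and more economically: it enlarges $s_1$ rather than $t$. Setting $G'=s_1'P_1+\sum_{\mu\geqslant 2}s_\mu P_\mu+tP_\infty$ with $s_1'$ large, the spanning result for $G'$ is available from the large-degree case, and $\mathcal{L}(G)=\bigl\{f\in\mathcal{L}(G')~\bigl|~v_{P_1}(f)\geqslant -s_1\bigr\}$. Because the basis elements of $\mathcal{L}(G')$ have pairwise distinct $P_1$-valuations equal to $i$ --- exactly the fact you already established for linear independence --- one gets $v_{P_1}(f)=\min_{a_i\neq 0}\{i\}$ for free, so the condition $v_{P_1}(f)\geqslant -s_1$ kills precisely the coefficients with $i<-s_1$. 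Since the parameter $s_1$ enters the definition of $\Omega$ only through the inequality $i+s_1\geqslant 0$, the set $\Omega_{s_1,\cdots,s_r,t}$ is exactly the subset of $\Omega_{s_1',\cdots,s_r,t}$ cut out by $i\geqslant -s_1$, and the surviving expansion of $f$ runs exactly over $\Omega_{s_1,\cdots,s_r,t}$. If you rewrite your final step with $P_1$ in place of $P_\infty$, your proof becomes complete with no new ingredients; as written, it has a hole at the step you yourself flag as unfinished.
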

\begin{proof}
	Let  $ (i,j_2,j_3,\cdots,j_r) \in \Omega_{s_1,\cdots,s_r, t} $. It follows from the definition that $ E_{i,j_2,j_3,\cdots,j_r} \in \mathcal{L}(G) $, where $ G=\sum_{\mu=1}^r s_{\mu}P_{\mu} + t P_{\infty} $.
	From Equation~\eqref{eq:E}, we have
	$ v_{P_1}(E_{i,j_2,j_3,\cdots,j_r}) = i$, which indicates that the valuation of $E_{i,j_2,j_3,\cdots,j_r}$ at the rational place $P_1$ uniquely depends on $i$. Since lattice points in $ \Omega_{s_1,\cdots,s_r, t} $ provide distinct values of $i$, the elements $ E_{i,j_2,j_3,\cdots,j_r} $
	are linearly indepedent of each other, with $ (i,j_2,j_3,\cdots,j_r) \in \Omega_{s_1,\cdots,s_r, t} $. To show that they form a basis for $ \mathcal{L}(G) $, the only thing  is to prove that
	\[
	\ell (G)
	= \#\Omega_{s_1,\cdots,s_r, t}.
	\]	
	For the case of $ s_1 $ sufficiently large, it follows from the Riemann-Roch Theorem and Lemma~\ref{thm:omega} that
	\begin{align*}
	 \ell (G) & = 1-g + \deg(G)\\
	& = 1-g + s_1+\cdots+s_r+t\\
	&= \#\Omega_{s_1,\cdots,s_r, t}.
	\end{align*}
	And this implies that
	$ \mathcal{L}(G) $
	is spanned by elements $ E_{i,j_2,j_3,\cdots,j_r} $ with $ (i,j_2,j_3,\cdots,j_r) \in \Omega_{s_1,\cdots,s_r, t} $.
	
	For the general case, we choose $ s_1'> s_1 $ large enough and set $ G' := s_1' P_1 +\sum_{\mu=2}^r s_{\mu}P_{\mu} + t P_{\infty} $.
	From above argument, we know that the elements $ E_{i,j_2,j_3,\cdots,j_r} $ with $ (i,j_2,j_3,\cdots,j_r) \in \Omega_{s_1',\cdots,s_r, t} $ span the whole space of	$ \mathcal{L}(G') $.
	Remember that
	$ \mathcal{L}(G) $ is a linear subspace of $ \mathcal{L}(G') $, which can be written as
	\begin{equation}
	\mathcal{L}(G) = \Big\{ f \in \mathcal{L}(G') ~\Big|~v_{P_1}(f)\geqslant -s_1\Big\}.
	\end{equation}
	Thus, we choose $ f \in \mathcal{L}(G)  $, and suppose that
	\begin{equation*}
	f=\sum_{(i,j_2,\cdots,j_r) \in \Omega_{s_1',\cdots,s_r, t}} a_{i} E_{i,j_2,\cdots,j_r},
	\end{equation*}
	since $ f \in \mathcal{L}(G') $. The valuation of $f$ at $ P_1 $ is $ v_{P_1}(f)=\min_{a_i\neq 0} \{ i \}$. Then the inequality $ v_{P_1}(f)\geqslant -s_1 $ gives that, if $ a_i \neq 0 $, then $ i \geqslant -s_1 $. Equivalently, if $ i < -s_1 $, then $ a_i=0 $.  From the definition of $ \Omega_{s_1,\cdots,s_r, t} $ and $ \Omega_{s_1',\cdots,s_r, t} $, we get that
		\begin{equation*}
		 f=\sum_{(i,j_2,\cdots,j_r) \in \Omega_{s_1,\cdots,s_r, t} } a_{i} E_{i,j_2,\cdots,j_r}.
		 \end{equation*}
	Then the theorem follows.
\end{proof}

We now turn to prove Lemma~\ref{thm:omega} which requires  a series of results listed as follows.

\begin{lem}\label{lem:PsiR}
  Let $ g= (r-1)(m-1)/2$ and $ \gcd(r,m)=1 $. Let $ t \in \mathbb{Z} $. Consider the lattice point set
	\begin{equation*}
	\Psi(t) = \Big\{ (I,k)
~\Big|~0 \leqslant I < m , ~rI \leqslant t-m k, ~ k \geqslant 0\Big\}.
	\end{equation*}
	If $ t \geqslant rm $, then the number $ \#\Psi(t) $ of $ \Psi(t) $ verifies the formula
	\begin{equation*}
	\#\Psi(t)= 1-g + t.
	\end{equation*}
\end{lem}
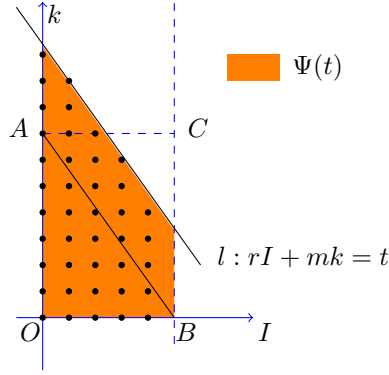
\begin{figure}[H]
	\centering
	\begin{tikzpicture}[scale=0.35]
	\path [fill=orange] (0,0)--(5,0) -- (5,3.32) -- (0,10.32);
	% coordinate
	\draw [blue,->] (-1,0)--(8,0);
	\draw [blue,->] (0,-2)--(0, 12);
    \draw [blue,dashed] (5,-1)--(5,12);
    \draw [blue,dashed] (0,7)--(5,7);

    \draw [black] (5,0)--(0, 7);
    \draw [black] (6,2.6-0.6)--(-1,12.4-0.6);

    \draw[color=black] node [yshift=-1ex,xshift=1ex] at (0,12) { $ k $ };
    \draw[color=black] node [yshift=-1.2ex,xshift=1ex] at (8,0) { $ I $ };
    \draw[color=black] node [yshift=0.5ex,xshift=-2ex] at (0,7) { $ A $ };
    \draw[color=black] node [yshift=-1.2ex,xshift=-1ex] at (0,0) { $ O $ };
    \draw[color=black] node [yshift=-1.2ex,xshift=1ex] at (5,0) { $ B $ };
    \draw[color=black] node [yshift=0.5ex,xshift=2ex] at (5,7) { $ C $ };
    %nodes

    \draw[color=black] node [yshift=-1.51ex,xshift=2ex] at (9,3) { $ l:rI+mk=t $ };

    \draw[color=black] node [yshift=0ex,xshift=1ex] at (10,9.5) { $ \Psi(t) $ };

    \path [fill=orange] (7,9)--(7,10) -- (9,10) -- (9,9);

%	\draw [red] (0,3.4)--(5,11.4);
%	
%	\draw [blue,->] (0,-1)--(0,12);
%	\draw [blue,dashed] (5,-1)--(5,12);
%	\draw [blue,dashed] (5,0)--(0,0);
%	\draw [->] (5,0)--(7,0);
%	\draw (-1,0)--(0,0);
%	\draw[color=black] node [yshift=-1ex,xshift=1ex] at (0,12) { $ l $ };
%	\draw[color=black] node [yshift=-1ex,xshift=1ex] at (7,0) { $ m $ };
%	
%	\draw[color=black] node [yshift=-1ex,xshift=-1ex] at (0,0) { $ O $ };
%	\draw[color=black] node   at (5.8,2.4) { $ \Phi_0 $ };
%	\draw [->] (5.2,2.2)--(4.5,1.7);
%	\draw[color=black] node [yshift=-1ex,xshift=-1.5ex] at (3,10) { $ L_\alpha^{(3)} $ };
%	\draw [->] (2.8,9)--(3.3,8.7);
	%places
	
	\draw [fill] (0, 0) circle [radius=0.1];
	\draw [fill] (0, 1) circle [radius=0.1];
	\draw [fill] (0, 2) circle [radius=0.1];
	\draw [fill] (0, 3) circle [radius=0.1];
	\draw [fill] (0, 4) circle [radius=0.1];
	\draw [fill] (0, 5) circle [radius=0.1];
	\draw [fill] (0, 6) circle [radius=0.1];
	\draw [fill] (0, 7) circle [radius=0.1];
	\draw [fill] (0, 8) circle [radius=0.1];
	\draw [fill] (0, 9) circle [radius=0.1];
	\draw [fill] (0, 10) circle [radius=0.1];
	\draw [fill] (1, 0) circle [radius=0.1];
    \draw [fill] (1, 1) circle [radius=0.1];
    \draw [fill] (1, 2) circle [radius=0.1];
    \draw [fill] (1, 3) circle [radius=0.1];	
    \draw [fill] (1, 4) circle [radius=0.1];
    \draw [fill] (1, 5) circle [radius=0.1];
    \draw [fill] (1, 6) circle [radius=0.1];
    \draw [fill] (1, 7) circle [radius=0.1];
    \draw [fill] (1, 8) circle [radius=0.1];
    \draw [fill] (1, 9) circle [radius=0.1];
    \draw [fill] (2, 0) circle [radius=0.1];
    \draw [fill] (2, 1) circle [radius=0.1];
    \draw [fill] (2, 2) circle [radius=0.1];
    \draw [fill] (2, 3) circle [radius=0.1];
    \draw [fill] (2, 4) circle [radius=0.1];
    \draw [fill] (2, 5) circle [radius=0.1];
    \draw [fill] (2, 6) circle [radius=0.1]; 	
	\draw [fill] (2, 7) circle [radius=0.1];
    \draw [fill] (3, 0) circle [radius=0.1];
    \draw [fill] (3, 1) circle [radius=0.1];
    \draw [fill] (3, 2) circle [radius=0.1];
    \draw [fill] (3, 3) circle [radius=0.1];
    \draw [fill] (3, 4) circle [radius=0.1];
    \draw [fill] (3, 5) circle [radius=0.1];
    \draw [fill] (3, 6) circle [radius=0.1];
    \draw [fill] (4, 0) circle [radius=0.1];
    \draw [fill] (4, 1) circle [radius=0.1];
    \draw [fill] (4, 2) circle [radius=0.1];
    \draw [fill] (4, 3) circle [radius=0.1];
    \draw [fill] (4, 4) circle [radius=0.1];
	\end{tikzpicture}
	
	\protect\caption{The lattice point set $\Psi(t) $ }
	\label{fig:Psi_t}
\end{figure}
\begin{proof}
	Let $ t_0 = rm $. As shown in Fig.\ref{fig:Psi_t}, we denote by $ \# \triangle OAB $ the number of lattice points in the triangle $ \triangle OAB $ including the edges $ OA $, $ OB $ and $ AB $, with $ O=(0,0) $, $ A=(0, r) $ and $ B= (m,0) $. It is easy to see that the number of lattice points in the rectangle $ \# \square AOBC $ (including all the edges) satisfies $ \# \square AOBC =(m+1)(r+1)$, where $ C=(m,r) $. Clearly $  \# \triangle OAB + \# \triangle ABC = \# \square AOBC +2 $. This indicates that
	$ \# \triangle OAB = (m+1)(r+1)/2 +1$.
	But $ \Psi(t_0) $ contains exactly the lattice points in the triangle $ \triangle OAB $ except the vertex $ B $, so
	\begin{align*}
	\#\Psi(t_0)&= \# \triangle OAB-1\\
	&=\dfrac{(m+1)(r+1)}{2}\\
	&=1-g+t_0.	 		
	\end{align*}	 	
	For $ t > t_0 $, consider the lattice points in the set $ \Psi(t) \backslash \Psi(t-1) $, which can be represented by
%	by a similar discussion to Lemma 9 and Proposition 10 of~\cite{Hu2016}, we have
	\begin{equation*}
     \Delta := \Big\{ (I,k)~\Big|~0 \leqslant I < m , ~rI+mk=t\Big\}.
	\end{equation*}	
 	The equation $ rI+mk=t $ has integer solutions since
$ \gcd(r,m)=1 $ and we assume that $ (I_0, k_0) $ is such a solution. Notice that all the other solutions are given by
	$ (I_0 +m \gamma, k_0 - r\gamma) $ with $ \gamma \in \mathbb{Z} $. This implies that
		\begin{equation*}
		\Delta=\Big\{ (I_0 +m \gamma, k_0 - r\gamma)~\Big|~0 \leqslant I_0 +m \gamma < m \Big\},
		\end{equation*}
		which gives that $ \# \Delta =1 $. Thus we conclude that
		\begin{equation*}
		\#\Psi(t)= t-t_0+\#\Psi(t_0)=1-g+t.
		\end{equation*}	
\end{proof}

From Lemma~\ref{lem:PsiR}, we obtain the number of lattice points in $ \Omega_{0,\cdots,0, t} $.
\begin{lem}\label{lem:omega0000t}
	If $ t\geqslant rm $, then the number of lattice points in $ \Omega_{0,\cdots,0, t} $ is
	\begin{equation}
	\#\Omega_{0,\cdots,0, t}= 1-g+t.
	\end{equation}
\end{lem}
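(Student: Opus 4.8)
The plan is to exhibit a bijection between $\Omega_{0,\cdots,0,t}$ and the set $\Psi(t)$ of Lemma~\ref{lem:PsiR}, and then simply read off the count from that lemma. Setting $s_1=\cdots=s_r=0$ in the description~\eqref{eq:omegarceiljmu}, a point $(i,j_2,\cdots,j_r)$ lies in $\Omega_{0,\cdots,0,t}$ exactly when $i\geqslant 0$, each $j_\mu=\left\lceil -i/m\right\rceil$ for $\mu=2,\cdots,r$, and $ri+m(j_2+\cdots+j_r)\leqslant t$. The first thing I would stress is that the coordinates $j_2,\cdots,j_r$ carry no freedom: the condition $0\leqslant i+mj_\mu<m$ pins down each $j_\mu$ uniquely once $i$ is fixed, so the entire point is governed by the single integer $i\geqslant 0$.

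Next I would perform Euclidean division $i=qm+I$ with $0\leqslant I<m$ and $q\geqslant 0$, and use the ceiling characterization recorded just before~\eqref{eq:omegarceiljmu} (together with $\left\lceil -i/m\right\rceil=-\left\lfloor i/m\right\rfloor$) to obtain $j_\mu=-q$ for every $\mu$. Substituting $j_2+\cdots+j_r=-(r-1)q$ and $i=qm+I$ into the degree inequality, the $q$--terms should cancel cleanly:
\[
ri+m(j_2+\cdots+j_r)=r(qm+I)-m(r-1)q=rI+mq,
\]
so the surviving constraint is merely $rI+mq\leqslant t$. This is precisely the defining inequality of $\Psi(t)$ after renaming $k=q$, and the ranges $0\leqslant I<m$ and $q\geqslant 0$ match as well.

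Hence the assignment $(i,j_2,\cdots,j_r)\mapsto (I,q)=(i\bmod m,\left\lfloor i/m\right\rfloor)$ is a bijection from $\Omega_{0,\cdots,0,t}$ onto $\Psi(t)$: it is injective because $i$ is recovered as $qm+I$ and then each $j_\mu=-q$ is forced, and surjective because any $(I,q)\in\Psi(t)$ lifts back to an admissible point by the same formulas. With the bijection in hand I would invoke Lemma~\ref{lem:PsiR}, whose hypothesis $t\geqslant rm$ is exactly the one assumed here, to conclude $\#\Omega_{0,\cdots,0,t}=\#\Psi(t)=1-g+t$.

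I expect the only delicate point to be the algebraic cancellation in the displayed line: one must verify that the contribution $m(j_2+\cdots+j_r)$ of the $(r-1)$ coincident coordinates exactly offsets the $qm$ part of $ri$, leaving the two--variable inequality that defines $\Psi(t)$. Everything else is bookkeeping, and it is worth noting that the coprimality $\gcd(r,m)=1$ is used only inside Lemma~\ref{lem:PsiR}, not in the reduction itself.
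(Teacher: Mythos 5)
Your proof is correct and follows essentially the same route as the paper: both reduce $\Omega_{0,\cdots,0,t}$ to the two-variable set $\Psi(t)$ by a change of variables and then invoke Lemma~\ref{lem:PsiR}. Indeed, the paper's substitution $I=i+mj_2$, $k=-j_2$, $J_\mu=j_\mu-j_2$ coincides with your Euclidean-division parametrization $I=i\bmod m$, $q=\lfloor i/m\rfloor$, since each $j_\mu$ is forced to equal $-\lfloor i/m\rfloor$.
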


\begin{proof}
	Note that
	%	\begin{equation}
	%	\mathcal{L}( t P_{\infty}) = \{  E_{i,j_2,j_3,\cdots,j_r} ~\Big|~(i,j_2,j_3,\cdots,j_r) \in \Omega_{0,\cdots,0, t} \} ,
	%	\end{equation}
	%	where
	\begin{align}\label{eq:omega000t}
	\Omega_{0,\cdots,0, t} =\Big \{& (i,j_2,j_3,\cdots,j_r)
	~\Big|~i \geqslant 0, \nonumber \\
	&~ 0 \leqslant i + mj_{\mu}  < m ~~\text{for} ~~ \mu =2,\cdots, r, \nonumber \\
	&~ri+m(j_2+\cdots+j_r)   \leqslant t
	\Big\},
	\end{align}
	Let $ I := i+ m j_2 $. We will show that the number of lattice points in $ \Omega_{0,\cdots,0, t} $ equals
	\begin{align}\label{eq:omega000t2}
	\# \Big\{ (I,k)
	~\Big|~0 \leqslant I < m , ~rI \leqslant t-m k, ~ k \geqslant 0
	\Big\}.
	\end{align}
	Then the assertion follows at once from Lemma~\ref{lem:PsiR}.
	
	Set  $ J_{\mu} := j_{\mu} - j_2 $, for $ \mu \geqslant 3 $.
  Then Equation~\eqref{eq:omega000t} gives that
	\begin{align*}
	\Omega_{0,\cdots,0, t} \cong  \Big\{& (I,j_2,J_3,\cdots,J_r)
	~\Big|~I \geqslant  m j_2, \nonumber \\
	& 0 \leqslant I < m, \nonumber \\
	& 0 \leqslant I+m J_{\mu}  < m, ~~\text{for} ~~ \mu \geqslant 3, \nonumber \\
	&rI-m j_2+m\sum_{\mu=3}^r J_{\mu} \leqslant t
	\Big\},\label{eq:omega0001t}
	\end{align*}
	Here and thereafter,
%	we say two lattice point sets $ A $ and $ B $ are isomorphic, denoted by $ A \cong B $, if there exists a bijection from $ A $ onto $ B $.
	 $ A \cong B $ means that two lattice point sets $ A $, $ B $ are bijective.
	From $0 \leqslant I < m $ and $ I \geqslant  m j_2 $, we must have $ j_2 \leqslant 0 $. Hence
	\begin{align*}
	\Omega_{0,\cdots,0, t} \cong \Big\{& (I,j_2,J_3,\cdots,J_r)
	~\Big|~j_2 \leqslant 0,  ~0 \leqslant I < m, \nonumber \\
	&~ 0 \leqslant I+m J_{\mu} < m, ~~\text{for} ~~ \mu \geqslant 3, \nonumber \\
	&~rI-m j_2+m\sum_{\mu=3}^r J_{\mu} \leqslant t
	\Big\}.
	\end{align*}
	However, $ 0 \leqslant I < m $ gives that $ J_{\mu}=0 $ for $ \mu \geqslant 3$. This implies that
	\begin{align*}
	\Omega_{0,\cdots,0, t} \cong \Big\{ (I,j_2)
	~\Big|~j_2 \leqslant 0,  ~0 \leqslant I < m,~rI-m j_2 \leqslant t   	\Big\}.
	\end{align*}
	Then Equation~\eqref{eq:omega000t2} follows from the above discussions and the assumption $ k = - j_2 $.
\end{proof}

\begin{lem}\label{lem:omegas2=0}
	Let $ s_2=0 $. The following equality holds:
	\[
	\# \Omega_{s_1,0, s_3, \cdots,s_r, t} = 1-g+t+\sum_{i=1}^r s_i ,
	\]
	where $ 1\leqslant s_i \leqslant m $ for $ i=1,3,4,\cdots, r $, and $ t \geqslant rm $.
\end{lem}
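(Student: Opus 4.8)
The plan is to mimic the change of variables used in the proof of Lemma~\ref{lem:omega0000t} and then to compare the resulting count against the baseline $\#\Omega_{0,\dots,0,t}=1-g+t$ term by term. Since $s_2=0$, for each admissible $i$ the coordinates $j_2,\dots,j_r$ are forced, so I first substitute $I:=i+mj_2$, $k:=-j_2$ and $J_\mu:=j_\mu-j_2$ for $\mu\geq 3$; this is a bijection with inverse $j_2=-k$, $i=I+mk$, $j_\mu=J_\mu-k$. Under it the constraint for $\mu=2$ becomes $0\leq I<m$, the constraints for $\mu\geq 3$ become $0\leq I+mJ_\mu+s_\mu<m$ (forcing $J_\mu=\lceil(-I-s_\mu)/m\rceil$), the bound $i+s_1\geq 0$ becomes $k\geq\lceil(-I-s_1)/m\rceil$, and the pole inequality $ri+m(j_2+\cdots+j_r)\leq t$ rewrites as $rI+mk+m\sum_{\mu\geq 3}J_\mu\leq t$.

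The key observation is that, because $0\leq I<m$ and (by hypothesis) $1\leq s_i\leq m$, each offset lies in $\{-1,0\}$. Introducing the Iverson bracket, where $[\mathcal{P}]$ is $1$ if $\mathcal{P}$ holds and $0$ otherwise, one checks that $J_\mu=-[\,I\geq m-s_\mu\,]$ and that the lower bound for $k$ is $-[\,I\geq m-s_1\,]$. Writing $N(I):=\#\{\mu\in\{3,\dots,r\}: I\geq m-s_\mu\}$ gives $\sum_{\mu\geq 3}J_\mu=-N(I)$, so the pole inequality reads $k\leq\lfloor(t-rI)/m\rfloor+N(I)$. Hence, for each fixed $I\in\{0,\dots,m-1\}$, the admissible values of $k$ run over $-[\,I\geq m-s_1\,]\leq k\leq\lfloor(t-rI)/m\rfloor+N(I)$; using $t\geq rm>rI$ to see this range is nonempty, it contains exactly $\lfloor(t-rI)/m\rfloor+N(I)+[\,I\geq m-s_1\,]+1$ integers.

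Summing over $I$ then yields
\[
\#\Omega_{s_1,0,s_3,\dots,s_r,t}=\sum_{I=0}^{m-1}\Big(\lfloor(t-rI)/m\rfloor+1\Big)+\sum_{I=0}^{m-1}N(I)+\sum_{I=0}^{m-1}[\,I\geq m-s_1\,].
\]
The first sum is exactly the count produced for $\Omega_{0,\dots,0,t}$ in Lemma~\ref{lem:omega0000t}, namely $1-g+t$. For the remaining two sums I interchange the order of summation: for fixed $\mu$ the number of $I\in\{0,\dots,m-1\}$ with $I\geq m-s_\mu$ is precisely $s_\mu$ (this is where $1\leq s_\mu\leq m$ is used), so $\sum_I N(I)=\sum_{\mu=3}^r s_\mu$ and $\sum_I[\,I\geq m-s_1\,]=s_1$. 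Adding these gives $1-g+t+s_1+s_3+\cdots+s_r=1-g+t+\sum_{i=1}^r s_i$, as claimed.

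The main obstacle is the bookkeeping forced by the relaxed lower bound on $k$ (equivalently, that $i$ may now be as small as $-s_1$) together with the shifted upper constraints on $\mu\geq 3$: one must verify that no per-$I$ range of $k$ becomes empty and that both $J_\mu$ and the $k$-offset jump by at most one as $I$ varies. Both facts rest on the hypothesis $1\leq s_i\leq m$, which confines $J_\mu$ and the offset to $\{-1,0\}$ and makes the per-$I$ excess over the baseline equal to the clean indicators $[\,I\geq m-s_\mu\,]$; without this bound the difference would no longer telescope into $\sum_i s_i$.
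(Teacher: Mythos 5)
Your proof is correct, and it takes a genuinely different route from the paper's. Both arguments open with the same change of variables ($I=i+mj_2$, $J_\mu=j_\mu-j_2$, and your $k=-j_2$), but from there the paper proceeds recursively: it slices the set by the value of $j_2$, shows the slice $\Psi_1$ contributes exactly $s_1$, splits each remaining slice into two pieces according to the sign of $I+mJ_3$, and uses a shift $J_3\mapsto J_3+1$ to telescope the pieces against the slices of $\Omega_{0,0,0,s_4,\cdots,s_r,t}$, gaining $s_1+s_3$; it then iterates this peeling over $\mu=4,\cdots,r$ until it reaches $\Omega_{0,\cdots,0,t}$ and invokes Lemma~\ref{lem:omega0000t}. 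You instead collapse everything into a single direct count: for each of the $m$ values of $I$ the fiber is an interval of $k$'s whose length you write in closed form, using the hypothesis $1\leqslant s_\mu\leqslant m$ to pin each $J_\mu$ and the $k$-offset to $\{-1,0\}$, and then you swap the order of summation so that each $s_\mu$ appears as $\#\{I : I\geqslant m-s_\mu\}$. Your version is shorter and arguably more transparent: it avoids the paper's bookkeeping with $\Psi'_{j_2}$, $\Psi''_{j_2}$, $\Phi_{j_2}$ and its formally infinite sums over $j_2$, it makes the role of the hypothesis $1\leqslant s_\mu\leqslant m$ explicit rather than implicit in a bijection, and it isolates exactly where $t\geqslant rm$ is needed (nonemptiness of each fiber, plus the base count from Lemma~\ref{lem:PsiR}). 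What the paper's recursive structure buys in exchange is stylistic coherence with the rest of the section, where counts are repeatedly manipulated by bijections between lattice point sets rather than by closed-form fiber counting.
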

\begin{proof}
	It follows from the definition that
	\begin{align*}
	\Omega_{s_1,0, s_3, \cdots,s_r, t} = \Big\{& (i,j_2,j_3,\cdots,j_r)
	~\Big|~i+s_1 \geqslant 0, \\
	&~ 0 \leqslant i + mj_2 < m ,\\
	&~ 0 \leqslant i + mj_{\mu} + s_{\mu} < m ~~\text{for} ~~ \mu \geqslant 3,\\
	&~ri+m(j_2+\cdots+j_r)   \leqslant t
	\Big\}.
	\end{align*}
	Put $ I := i+ m j_2 $, $ J_{\mu} := j_{\mu} - j_2 $ for $ \mu \geqslant 3 $.  Then
		\begin{align}\label{eq:omegawithj2}
		\Omega_{s_1,0, s_3, \cdots,s_r, t} \cong \Big\{& (I,j_2,J_3,\cdots,J_r)
		~\Big|~I \geqslant mj_2 -s_1, \nonumber \\
		&~ 0 \leqslant I < m ,\nonumber \\
		&~ 0 \leqslant I+ mJ_{\mu} + s_{\mu} < m ~~\text{for} ~~ \mu \geqslant 3,\nonumber \\
		&~rI-mj_2+m(J_3+\cdots+J_r)   \leqslant t
		\Big\}.
		\end{align}
	It turns out that $j_2 \leqslant 1$, since $ 1\leqslant s_1 \leqslant m $, $ I \geqslant mj_2 -s_1 $ and $ 0 \leqslant I < m $.
	Hence
			\begin{align*}
	 \#\Omega_{s_1,0, s_3, \cdots,s_r, t} =
	 \sum_{j_2=-\infty}	^1 \# \Psi_{j_2},
		\end{align*}
	where 	$ \Psi_{j_2} $ denotes the right hand side of Equation~\eqref{eq:omegawithj2} with a fixed $ j_2 $. We first calculate $ \#\Psi_1 $, where
	\begin{align*}
	\Psi_{1}  = \Big\{& (I,J_3,\cdots,J_r)
	~\Big|~ m-s_1 \leqslant I < m ,\\
	 &~ J_{\mu} = \left \lceil \frac{-I-s_{\mu}}{m} \right \rceil  ~~\text{for} ~~ \mu \geqslant 3,\\
	&~rI+m(J_3+\cdots+J_r)   \leqslant t+m
	\Big\}.
	\end{align*}
	Since $ J_{\mu} \in \{0,-1\} $ for $ \mu \geqslant 3 $, and $ m(J_3+\cdots+J_r) \leqslant 0 $,
	if we choose  $ C := r(m-1)-m $, then the last inequality in $ \Psi_{1} $ holds for all $ t \geqslant C $.  This shows that $ \#\Psi_1 $ is determined by the first inequality. Then
	$ \#\Psi_1 =s_1$. Therefore
	\begin{align}\label{eq:omegas2=0}
	\# \Omega_{s_1,0, s_3, \cdots,s_r, t} =
	\sum_{j_2=-\infty}	^0 \# \Psi_{j_2}+s_1.
	\end{align}
	
	Next, we turn to calculate the total number of lattice points in $ \Psi_{j_2} $ for all $ j_2 \leqslant 0 $. For this purpose, we write
	\begin{align}\label{eq:psij2andj2}
	 \#\Psi_{j_2} = \#\Psi_{j_2}' +\# \Psi_{j_2}''  ,
	 \end{align} where
	\begin{align*}
	\Psi_{j_2}' : = \Big\{& (I,J_3,\cdots,J_r)~\Big|~ 0 \leqslant I < m ,\\
	&~ -s_3 \leqslant I+mJ_3<0,\\
	&~ 0 \leqslant I+ mJ_{\mu} + s_{\mu} < m ~~\text{for} ~~ \mu \geqslant 4,\\
	&~rI+m(J_3+\cdots+J_r)   \leqslant t + mj_2
	\Big\},
	\end{align*}
   and
   	\begin{align*}
   	\Psi_{j_2}'' : = \Big\{& (I,J_3,\cdots,J_r)~\Big|~ 0 \leqslant I < m ,\\
   	&~ 0 \leqslant I+mJ_3< m-s_3,\\
   	&~ 0 \leqslant I+ mJ_{\mu} + s_{\mu} < m ~~\text{for} ~~ \mu \geqslant 4,\\
   	&~rI+m(J_3+\cdots+J_r)   \leqslant t + mj_2
   	\Big\}.
   	\end{align*}
   	If we let $ J_3':=J_3+1 $, then the following assertion holds:
   		\begin{align*}
   		\Psi_{j_2}'  \cong \Big\{& (I,J_3',\cdots,J_r)~\Big|~ 0 \leqslant I < m ,\\
   		&~ m-s_3 \leqslant I+mJ_3'<m,\\
   		&~ 0 \leqslant I+ mJ_{\mu} + s_{\mu} < m ~~\text{for} ~~ \mu \geqslant 4,\\
   		&~rI+m(J_3'+\cdots+J_r)   \leqslant t + m(j_2+1)
   		\Big\}.
   		\end{align*}
   		Observe that $ \#\Psi_{j_2-1}' + \# \Psi_{j_2}''$ is equivalent to $ \#\Phi_{j_2}  $, where
   		  		\begin{align*}
   		  		\Phi_{j_2} := \Big\{& (I,J_3,\cdots,J_r)~\Big|~ 0 \leqslant I < m ,\\
   		  		&~ 0 \leqslant I+mJ_3<m,\\
   		  		&~ 0 \leqslant I+ mJ_{\mu} + s_{\mu} < m ~~\text{for} ~~ \mu \geqslant 4,\\
   		  		&~rI+m(J_3+\cdots+J_r)   \leqslant t + mj_2
   		  		\Big\},
   		  		\end{align*}
   		  		and since $ \# \Omega_{0,0, 0,s_4 \cdots,s_r, t}=\sum_{j_2=-\infty}^0 \# \Phi_{j_2} $, we have
   		  			\begin{align}\label{eq:omegapsij2}
   		  			\# \Omega_{0,0, 0,s_4 \cdots,s_r, t}&= \sum_{j_2=-\infty}^0(\#\Psi_{j_2-1}' + \# \Psi_{j_2}'') \nonumber \\
   		  			&= \sum_{j_2=-\infty}^0(\#\Psi_{j_2}' + \# \Psi_{j_2}'')-\#\Psi_{0}'.
   		  			\end{align}
   	   	Precisely speaking,
   	   	\begin{align*}
   	   	\Psi_{0}' \cong \Big\{& (I,J_3,\cdots,J_r)~\Big|~ 0 \leqslant I < m ,\\
   	   	&~ -s_3 \leqslant I+mJ_3<0,\\
   	   	&~ 0 \leqslant I+ mJ_{\mu} + s_{\mu} < m ~~\text{for} ~~ \mu \geqslant 4,\\
   	   	&~rI+m(J_3+\cdots+J_r)   \leqslant t
   	   	\Big\}.   	   	
   	   	\end{align*}
%   	   	According to the first two conditions of $ \#\Psi_{0}'  $, we find that only $ J_3=-1 $ holds. It turns out that
%   	   		\begin{align*}
%   	   		\#\Psi_{0}'  = \#\{& (I,J_4,\cdots,J_r)~\Big|~ m -s_3 \leqslant I<m,\\
%   	   		&~ 0 \leqslant I+ mJ_{\mu} + s_{\mu} < m ~~\text{for} ~~ \mu \geqslant 4,\\
%   	   		&~rI+m(J_4+\cdots+J_r)   \leqslant t +m
%   	   		\}.
%   	   		\end{align*}
   	   	In a similar way as we calculate $ \#\Psi_1 $, we can see that
   	   	$ \#\Psi_{0}'= s_3 $. This, together with Equations~\eqref{eq:omegas2=0},~\eqref{eq:psij2andj2} and~\eqref{eq:omegapsij2}, shows that
   	   	\begin{align*}
   	    \# \Omega_{s_1,0, s_3, \cdots,s_r, t}= \# \Omega_{0,0, 0,s_4 \cdots,s_r, t} + s_1 +s_3.
   	   	\end{align*}
   	   	Repeating the above routine gives that
   	   	   	   	\begin{align*}
   	   	   	   	\# \Omega_{s_1,0, s_3, \cdots,s_r, t}= \# \Omega_{0,0,  \cdots,0, t} + s_1 +s_3+\cdots+s_r.
   	   	   	   	\end{align*}
   	   	The desired conclusion then follows from Lemma~\ref{lem:omega0000t}.
\end{proof}
Now we are in a position to give the proof of Lemma~\ref{thm:omega}.
\begin{proof}[Proof of Lemma~\ref{thm:omega}]

Let us consider the lattice point set $ \Omega_{s_1,s_2,\cdots,s_r, t} $, which is given by
\begin{align*}
\Omega_{s_1,s_2,\cdots,s_r, t} = \Big\{& (i,j_2,j_3,\cdots,j_r)
~\Big|~i+s_1 \geqslant 0, \\
&~ 0 \leqslant i + mj_{2} + s_{2} < m, \\
&~ 0 \leqslant i + mj_{\mu} + s_{\mu} < m ~~\text{for} ~~ \mu \geqslant 3,\\
&~ri+m(j_2+\cdots+j_r)   \leqslant t
\Big\}.
\end{align*}
Let $ \tilde{i}:=i+s_2 $,  $ \tilde{t}: = t + rs_2 $ and $ \tilde{s}_{\mu}:=  s_{\mu} -s_2$ for $ \mu \geqslant 1$. Then $  \Omega_{s_1,s_2,\cdots,s_r, t} $ becomes
\begin{align*}
\Omega_{\tilde{s}_1,0,\tilde{s}_3,\cdots,\tilde{s}_r, \tilde{t}} = \Big\{& (\tilde{i},j_2,j_3,\cdots,j_r)
~\Big|~\tilde{i}+\tilde{s}_1  \geqslant 0, \\
&~ 0 \leqslant \tilde{i} + mj_{2}  < m, \\
&~ 0 \leqslant \tilde{i} + mj_{\mu} + \tilde{s}_{\mu} < m ~~\text{for} ~~ \mu \geqslant 3,\\
&~r\tilde{i}+m(j_2+\cdots+j_r)  \leqslant \tilde{t} ~
\Big\}.
\end{align*}
By writing $ \tilde{s}_{\mu}$ as $ \tilde{s}_{\mu} =  mA_{\mu} +B_{\mu}$, where $ 1\leqslant B_{\mu} \leqslant m $  for $ \mu \geqslant 1$, and taking $ I := \tilde{i}+mA_1 $, $ J_{\mu} := j_{\mu}-A_1+A_{\mu} $ for $\mu \geqslant 3$, $J_2 := j_{2}-A_1 $ and $ T :=\tilde{t}+m(A_1+\sum_{\mu=3}^r A_{\mu}) $, we can reduce  $ \Omega_{\tilde{s}_1,0,\tilde{s}_3,\cdots,\tilde{s}_r, \tilde{t}} $ to	
	\begin{align*}
	\Omega_{B_1,0,B_3,\cdots,B_r, T} = \Big\{& (I,J_2,J_3,\cdots,J_r)
	~\Big|~I+B_1  \geqslant 0, \\
	&~ 0 \leqslant I + mJ_{2}  < m, \\
	&~ 0 \leqslant I + mJ_{\mu} + B_{\mu} < m ~~\text{for} ~~ \mu \geqslant 3,\\
	&~rI+m(J_2+\cdots+J_r)  \leqslant T ~
	\Big\}.
	\end{align*}
	 Since the inequality $ s_1+\cdots+s_r+ t\geqslant(2r-1)m $ means that $ T \geqslant rm $, it follows from Lemma~\ref{lem:omegas2=0} that
\begin{align*}
\#\Omega_{B_1,0,B_3,\cdots,B_r, T}
& = 1-g+ T + \sum_{\mu=1 \atop \mu \neq 2}^r B_{\mu} \\
&=  1- g + \tilde{t}+\sum_{\mu=1 \atop
	\mu \neq 2}^r \tilde{s}_{\mu}\\
&= 1 -g + t + \sum_{\mu=1 }^r s_{\mu}.
\end{align*}
Note that
\begin{align*}
\#\Omega_{s_1,s_2,\cdots,s_r, t}=\#\Omega_{B_1,0,B_3,\cdots,B_r, T}.
\end{align*}
This finishes the proof of Lemma~\ref{thm:omega}.
	
\end{proof}

We finish this section with a result that allows us to give a new form of the base for our Riemann-Roch spaces. Since $ \gcd(r,m)=1 $, we write $ ar+ bm=1 $ for integers $a$ and $b$.  Denote
\begin{equation*}
\Lambda_{u,v_2,v_3,\cdots,v_r}:=\beta^u\prod_{\mu=2}^r h_{\mu}^{v_{\mu}},
\end{equation*}
 where $ h_{\mu}: = \dfrac{x-\alpha_{\mu}}{x-\alpha_1} $ for $ \mu \geqslant 2 $, and $ \beta:= z^{-a}(x-\alpha_1)^{-b} $. The divisor of $ \Lambda_{u,v_2,v_3,\cdots,v_r}$ is given by
\begin{align*}
(\Lambda_{u,v_2,\cdots,v_r})=
& (- (a+bm)u- m(v_2+\cdots+v_r))P_1 \\
&     + \sum_{\mu=2}^{r}(- au + mv_{\mu}) P_{\mu}+u P_{\infty}.
\end{align*}
\begin{cor}\label{cor:thetabase2}
		Let $ G:=\sum_{\mu=1}^r s_{\mu}P_{\mu} + t P_{\infty} $. Then the elements $ \Lambda_{u,v_2,\cdots,v_r} $ with $ (u,v_2,v_3,\cdots,v_r) \in \Theta_{s_1,\cdots,s_r, t} $ form a basis for the Riemann-Roch space
		$ \mathcal{L}(G) $, where
			\begin{align}
		\Theta_{s_1,\cdots,s_r, t} := \Big\{& (u,v_2,\cdots,v_r)~\Big| \nonumber \\
			 & - (a+bm)u-m(v_2+\cdots+v_r) +s_1 \geqslant 0  , \nonumber \\
			&~ 0 \leqslant - au + mv_{\mu} +s_{\mu} < m ~~\text{for} ~~ \mu \geqslant 2, \nonumber \\
			&~ u   \geqslant - t
			\Big\}.
			\end{align} \label{eq:thetaaaa}
			And then $ \#\Theta_{s_1,\cdots,s_r, t}=\# \Omega_{s_1,\cdots,s_r, t}  $.
\end{cor}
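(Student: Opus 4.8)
The plan is to derive the statement directly from Theorem~\ref{thm:basis1} by exhibiting an explicit lattice bijection $\Omega_{s_1,\cdots,s_r,t}\to\Theta_{s_1,\cdots,s_r,t}$ under which each new candidate $\Lambda_{u,v_2,\cdots,v_r}$ is a nonzero scalar multiple of a corresponding old basis element $E_{i,j_2,\cdots,j_r}$. Since $F_{\mathcal{K}}$ has full constant field $\mathbb{F}_q$, two functions with the same divisor differ only by a factor in $\mathbb{F}_q^{\times}$; hence, if I can match the divisor of each $\Lambda$ to that of some $E$ through a bijection of the index sets, then the $\Lambda$'s are a rescaling-and-reindexing of a basis and the dimension count is inherited.

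First I would read off, from Equation~\eqref{eq:E} and from the divisor of $\Lambda_{u,v_2,\cdots,v_r}$ recorded above, the valuations of the two families at $P_1,\dots,P_r,P_{\infty}$. Imposing $(E_{i,j_2,\cdots,j_r})=(\Lambda_{u,v_2,\cdots,v_r})$ and comparing the $P_{\infty}$- and $P_{\mu}$-coefficients forces
\begin{align*}
u=-\Big(ri+m\sum_{\mu=2}^{r}j_{\mu}\Big),\qquad mv_{\mu}=i+mj_{\mu}+au\quad(\mu\geqslant 2).
\end{align*}
The crux is that the second relation yields an \emph{integer} $v_{\mu}$: substituting the value of $u$ and invoking the B\'ezout identity $ar+bm=1$ gives $i+au=m\big(bi-a\sum_{\nu=2}^{r}j_{\nu}\big)$, so $v_{\mu}=j_{\mu}+bi-a\sum_{\nu=2}^{r}j_{\nu}\in\mathbb{Z}$. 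The same identity makes the remaining $P_{1}$-coefficient agree automatically, $-(a+bm)u-m\sum_{\mu}v_{\mu}=i$, so that $\phi\colon(i,j_2,\cdots,j_r)\mapsto(u,v_2,\cdots,v_r)$ is well defined by the $P_{\infty}$- and $P_{\mu}$-equations alone. Solving backwards gives the integral inverse $i=-(a+bm)u-m\sum_{\mu}v_{\mu}$, $j_{\mu}=v_{\mu}+bu+\sum_{\nu=2}^{r}v_{\nu}$, so $\phi$ is an automorphism of $\mathbb{Z}^{r}$.

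It then remains to see that $\phi$ carries $\Omega_{s_1,\cdots,s_r,t}$ onto $\Theta_{s_1,\cdots,s_r,t}$. Rewriting the defining inequalities of $\Omega_{s_1,\cdots,s_r,t}$ through valuations gives precisely $v_{P_1}(E)+s_1\geqslant 0$, $0\leqslant v_{P_{\mu}}(E)+s_{\mu}<m$ for $\mu\geqslant 2$, and $v_{P_{\infty}}(E)+t\geqslant 0$; the inequalities defining $\Theta_{s_1,\cdots,s_r,t}$ are the identical conditions imposed on $v_{P_1}(\Lambda),v_{P_{\mu}}(\Lambda),v_{P_{\infty}}(\Lambda)$. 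As $\phi$ was built to equate these valuations place by place, $(i,j_2,\cdots,j_r)\in\Omega_{s_1,\cdots,s_r,t}$ if and only if $\phi(i,j_2,\cdots,j_r)\in\Theta_{s_1,\cdots,s_r,t}$; so $\phi$ restricts to a bijection, giving $\#\Theta_{s_1,\cdots,s_r,t}=\#\Omega_{s_1,\cdots,s_r,t}$. Because $\Lambda_{u,v_2,\cdots,v_r}$ and $E_{i,j_2,\cdots,j_r}$ then have equal divisors, they are proportional by a nonzero constant, and the family $\{\Lambda_{u,v_2,\cdots,v_r}\}$ is a basis of $\mathcal{L}(G)$ by Theorem~\ref{thm:basis1}.

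I expect the only genuine obstacle to be the change-of-variables bookkeeping: verifying that $\phi$ and its inverse preserve integrality and that the $P_1$-condition is redundant rather than independent, both of which rest on the single cancellation $ar+bm=1$. Linear independence and spanning then require no separate argument. If one wishes to avoid the constant-field remark, the valuation-at-$P_{\infty}$ argument of Theorem~\ref{thm:basis1} transfers verbatim, since for fixed $u$ the conditions pin down $v_2,\cdots,v_r$ uniquely; distinct points of $\Theta_{s_1,\cdots,s_r,t}$ then give the $\Lambda$'s distinct orders at $P_{\infty}$, and the equality $\#\Theta_{s_1,\cdots,s_r,t}=\ell(G)$ completes the proof.
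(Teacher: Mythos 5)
Your proposal is correct and follows essentially the same route as the paper: the change of variables you derive (with its integral inverse) is exactly the index bijection used in the paper's proof, and both arguments reduce the corollary to Theorem~\ref{thm:basis1} via that bijection. The only difference is one of justification rather than substance: the paper checks, using the relation $x-\alpha_1 = z^m(x-\alpha_2)^{-1}\cdots(x-\alpha_r)^{-1}$, that $\Lambda_{u,v_2,\cdots,v_r}$ and the corresponding $E_{i,j_2,\cdots,j_r}$ are \emph{literally the same function}, whereas you match divisors and invoke the constant-field fact (valid here, since the totally ramified rational place $P_{\infty}$ forces the full constant field to be $\mathbb{F}_q$) to conclude proportionality, which serves the same purpose.
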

\begin{proof}
		Note that $ y^m=f(x)^{\lambda} $ and $ A\lambda+ Bm=1 $. Then
		$ y^{Am}=f(x)^{A\lambda} $, which gives that $ f(x)=y^{Am}f(x)^{Bm}=z^m $, where $ z=y^{A}f(x)^{B} $.
		From Equation~\eqref{eq:Kumext1}, we get
		\[
		x-\alpha_1 =z^m(x-\alpha_2)^{-1}\cdots(x-\alpha_{r})^{-1} .
		\]
	We claim that the set
	\begin{equation*}
	\Big\{ \Lambda_{u,v_2,\cdots,v_r}~\Big|~(u,v_2,\cdots,v_r)\in \Theta_{s_1,\cdots,s_r, t} \Big\}
	\end{equation*} equals the set
	\begin{equation*}
	\Big\{ E_{i,j_2,\cdots,j_r} ~\Big|~ (i,j_2,\cdots,j_r)\in \Omega_{s_1,\cdots,s_r, t}\Big\}.
	\end{equation*}
In fact, for fixed $(u,v_2,v_3,\cdots,v_r)\in \mathbb{Z}^r$, we obtain $\Lambda_{u,v_2,\cdots,v_r}$ equals $ E_{i,j_2,\cdots,j_r} $ with
\begin{align*}
i&= - (a+bm)u-m(v_2+\cdots+v_r) , \\
 j_{\mu}&=bu+(v_2+\cdots+v_r)+v_{\mu},
\end{align*}
 for $\mu \geqslant 2$. On the contrary, if we set
\begin{align*}
u&=-ri-m(j_2+\cdots+j_r) ,\\
 v_{\mu}&=bi-a(j_2+\cdots+j_r)+j_{\mu} ,
 \end{align*}
 for $\mu \geqslant 2$, then $ E_{i,j_2,\cdots,j_r} $ is exactly the element $ \Lambda_{u,v_2,\cdots,v_r} $.
Therefore, if we restrict $ (i,j_2,\cdots,j_r) $ in $ \Omega_{s_1,\cdots,s_r, t} $, then we must have  $(u,v_2,\cdots,v_r)$ is in $ \Theta_{s_1,\cdots,s_r, t} $ and vice versa. This completes the proof of the claim and hence of this corollary.
\end{proof}

\section{Weierstrass semigroups and pure gap sets}\label{sec:Weiersemipureg}
In this section, we calculate the Weierstrass semigroups and the pure gap sets at the totally ramified places $ P_1,\cdots, P_l , P_{\infty} $, which will require auxiliary results described below.

%Let $ P_1 $ , $ P_2 $ be rational points in $ \mathbb{P}_F $, and $ \beta_1<\beta_2<\cdots<\beta_g $ and $ \gamma_1 <\gamma_2<\cdots<\gamma_g $ be the gap sequences at $ P_1 $ and $ P_2 $, respectively.
%For a fixed $ \beta $

\begin{lem}\label{lem:omeganoorder}
The lattice point set $ \Omega_{s_1,\cdots, s_r, t} $ is symmetric with respect to $ s_1,\cdots, s_r $, which means that	$\# \Omega_{s_1,\cdots, s_r, t} = \# \Omega_{s_1',\cdots, s_r', t}  $, where $ \{s_1,\cdots, s_r\}=\{s_1',\cdots, s_r'\}   $.
\end{lem}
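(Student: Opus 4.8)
The plan is to pass from the coordinates $(i,j_2,\ldots,j_r)$ to the valuation coordinates $w_1:=i$ and $w_\mu:=i+mj_\mu$ for $\mu\geqslant 2$, which are exactly the orders of vanishing of $E_{i,j_2,\ldots,j_r}$ at $P_1,\ldots,P_r$ by Equation~\eqref{eq:E}. This map is a bijection onto the integer tuples satisfying the common congruence $w_1\equiv w_2\equiv\cdots\equiv w_r\pmod m$ (all $w_\mu\equiv i$), with inverse $i=w_1$, $j_\mu=(w_\mu-w_1)/m$. In these coordinates the defining conditions of $\Omega_{s_1,\ldots,s_r,t}$ become far more symmetric: the place $P_1$ contributes $w_1\geqslant -s_1$, each $P_\mu$ ($\mu\geqslant2$) contributes $-s_\mu\leqslant w_\mu<m-s_\mu$, and, using the identity $ri+m(j_2+\cdots+j_r)=w_1+\cdots+w_r$, the pole-order condition at $P_\infty$ collapses to the clean symmetric inequality $w_1+\cdots+w_r\leqslant t$. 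I would record that last identity explicitly, since it is what makes the sum condition invariant under permuting the places.

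Next I would organise the count by the common residue $c\in\{0,1,\ldots,m-1\}$ of the $w_\mu$ modulo $m$. For every $\mu$, including $\mu=1$, the half-open interval $[-s_\mu,\,m-s_\mu)$ has length exactly $m$, hence contains a unique integer $w_\mu(c)$ congruent to $c$. The only genuine difference between $P_1$ and the other places is that $w_1$ is bounded below but not above: the admissible $w_1\equiv c$ are exactly $w_1(c)+mk$ with $k\geqslant 0$, whereas each $w_\mu$ ($\mu\geqslant2$) is pinned to $w_\mu(c)$. Writing $S(c):=\sum_{\mu=1}^r w_\mu(c)$, the condition $\sum_\mu w_\mu\leqslant t$ reads $mk\leqslant t-S(c)$, so the number of admissible points in residue class $c$ is $\max\{0,\lfloor (t-S(c))/m\rfloor+1\}$, giving
\[
\#\Omega_{s_1,\ldots,s_r,t}=\sum_{c=0}^{m-1}\max\Big\{0,\;\big\lfloor\tfrac{t-S(c)}{m}\big\rfloor+1\Big\}.
\]

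The symmetry is then immediate. Each $w_\mu(c)$ depends on $s_\mu$ and $c$ through one and the same rule, so permuting $s_1,\ldots,s_r$ merely permutes the summands of $S(c)=\sum_{\mu=1}^r w_\mu(c)$ and leaves $S(c)$ unchanged for every $c$; hence each term of the outer sum, and therefore $\#\Omega_{s_1,\ldots,s_r,t}$, is unaltered. Since $\{s_1,\ldots,s_r\}=\{s_1',\ldots,s_r'\}$ as multisets implies $S(c)$ is the same for both data, this yields $\#\Omega_{s_1,\ldots,s_r,t}=\#\Omega_{s_1',\ldots,s_r',t}$.

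I expect the main obstacle to be conceptual rather than computational: the definition of $\Omega$ singles out $P_1$ (it carries no $j_1$ variable and only a one-sided bound), so the real content is recognising that this apparent asymmetry is spurious and disappears once the free parameter $k$ is summed out, after which $P_1$ enters on the same footing as the others through the uniform representative $w_\mu(c)\in[-s_\mu,m-s_\mu)$. A secondary point worth flagging is that, unlike Lemma~\ref{thm:omega}, this argument requires no lower bound on $t$ or on $s_1+\cdots+s_r$: the representative $w_\mu(c)$ exists for every $c$ regardless, so the displayed identity, and hence the symmetry, holds unconditionally.
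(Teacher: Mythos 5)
Your proof is correct and takes essentially the same route as the paper: the paper writes $i'=i+mk$ with $0\leqslant i\leqslant m-1$ and turns the condition at $P_1$ into the same ceiling rule $j_1=\left\lceil \frac{-i-s_1}{m}\right\rceil$ that governs the other places, plus a free parameter $\iota\geqslant 0$ --- which is precisely your decomposition by the common residue $c$ with representatives $w_\mu(c)\in[-s_\mu,\,m-s_\mu)$ and free parameter $k$, after which the symmetric sum condition makes the permutation invariance evident. Your closed-form count $\sum_{c=0}^{m-1}\max\bigl\{0,\;\lfloor (t-S(c))/m\rfloor+1\bigr\}$ is a small extra the paper does not state explicitly, but the underlying symmetrization is the same.
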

\begin{proof}
	Recall that $ \Omega_{s_1,\cdots,s_r, t} $  is defined by
	\begin{align*}
	\Omega_{s_1,\cdots,s_r, t} = \Big\{& (i',j_2',j_3',\cdots,j_r')
	~\Big|~i'+s_1 \geqslant 0, \nonumber \\
	&~ j_{\mu}' = \left \lceil \frac{-i'-s_{\mu}}{m} \right \rceil  ~~\text{for} ~~ \mu =2,\cdots, r,\nonumber \\
	&~ri'+m(j_2'+\cdots+j_r')   \leqslant t
	\Big\}.
	\end{align*}
It is important to write $ i'=i+mk $ with $ 0 \leqslant i \leqslant m-1 $. Let $  j_{\mu}' =  j_{\mu}-k $ for $  \mu \geqslant 2 $. Then
	\begin{align*}
	\Omega_{s_1,\cdots,s_r, t} \cong\Big\{& (i,k,j_2,j_3,\cdots,j_r)
	~\Big|~i+mk \geqslant -s_1,\\
	& ~0 \leqslant i \leqslant m-1, \nonumber \\
	&~ j_{\mu} = \left \lceil \frac{-i-s_{\mu}}{m} \right \rceil  ~~\text{for} ~~ \mu =2,\cdots, r,\nonumber \\
	&~ri+m(k+j_2+\cdots+j_r)   \leqslant t
	\Big\}.
	\end{align*}
The first inequality gives that  $ k\geqslant j_{1}: = \left \lceil \frac{-i-s_{1}}{m} \right \rceil  $. So we write $ k= j_{1} +\iota $ with $ \iota \geqslant 0 $.
 Then
		\begin{align*}
		\Omega_{s_1,\cdots,s_r, t}\cong \Big\{& (i,\iota, j_1,j_2,j_3,\cdots,j_r)
		~\Big|\\
		&  ~0 \leqslant i \leqslant m-1,~ \iota \geqslant 0,\\
		&~ j_{\mu} = \left \lceil \frac{-i-s_{\mu}}{m} \right \rceil  ~~\text{for} ~~ \mu =1,\cdots, r,\nonumber \\
		&~ri+m(j_1+j_2+\cdots+j_r) +m\iota  \leqslant t
		\Big\}.
		\end{align*}
The right hand side means that the number of the lattice points does not depend on the order of $s_{\mu}$ with $ 1 \leqslant \mu \leqslant r $, which concludes the desired assertion.		
\end{proof}

\begin{lem}\label{lem:omegawith1}
	Let $ \Omega_{s_1,s_2, \cdots,s_r, t} $ be the lattice point set defined by Equation~\eqref{eq:omegarceiljmu}. The following assertions hold.
	\begin{enumerate}
		\item
		$ \#	\Omega_{s_1,s_2, \cdots,s_r, t}  = \#	\Omega_{s_1-1,s_2,\cdots,s_r, t}+1  $ if and only if
		$ m \sum\limits_{\mu=2 }^r\left\lceil \dfrac{s_1-s_{\mu}}{m}\right\rceil
		\leqslant t+ rs_1  $.
				\item
				$ \#	\Omega_{s_1, \cdots,s_r, t}  = \#	\Omega_{s_1,\cdots,s_r, t-1}+1  $ if and only if
				$
				~ m \sum\limits_{\mu=2 }^{r}\left\lceil \dfrac{-at-s_{\mu}}{m}\right\rceil
				\leqslant s_1 + (a+bm)t $.
	\end{enumerate}
\end{lem}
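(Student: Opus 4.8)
The plan is to exploit that every lattice point of $\Omega_{s_1,\dots,s_r,t}$ is completely determined by its first coordinate $i$: by the form~\eqref{eq:omegarceiljmu} the remaining coordinates are forced to be $j_\mu=\lceil(-i-s_\mu)/m\rceil$, so
\[
\#\Omega_{s_1,\dots,s_r,t}=\#\Big\{\,i\in\mathbb{Z}\ \Big|\ i\geqslant -s_1,\ \phi(i)\leqslant t\,\Big\},\qquad \phi(i):=ri+m\sum_{\mu=2}^r\Big\lceil\tfrac{-i-s_\mu}{m}\Big\rceil.
\]
The point is that in this description $s_1$ enters only through the lower bound $i\geqslant -s_1$, whereas $t$ enters only through the upper constraint $\phi(i)\leqslant t$; this clean separation of the two parameters drives both assertions.

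For part (1) I would note that lowering $s_1$ by one only sharpens $i\geqslant -s_1$ to $i\geqslant -s_1+1$, while $\phi$ is unchanged; hence $\Omega_{s_1-1,s_2,\dots,s_r,t}\subseteq\Omega_{s_1,s_2,\dots,s_r,t}$ and the two sets differ by at most the single integer $i=-s_1$. Consequently the cardinality can only grow by $0$ or $1$, and it grows by exactly $1$ precisely when $i=-s_1$ survives the constraint $\phi(-s_1)\leqslant t$. Substituting and simplifying, $\phi(-s_1)=-rs_1+m\sum_{\mu=2}^r\lceil(s_1-s_\mu)/m\rceil$, so $\phi(-s_1)\leqslant t$ rearranges to $m\sum_{\mu=2}^r\lceil(s_1-s_\mu)/m\rceil\leqslant t+rs_1$, which is exactly the claimed condition. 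No injectivity of $\phi$ is needed here, since only one candidate point is ever in play.

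For part (2) the parameter $t$ is the upper bound in the $\Omega$-picture, so the naive difference $\#\{i\geqslant -s_1:\phi(i)=t\}$ is not obviously at most one. To restore the same one-candidate argument I would switch to the dual description furnished by Corollary~\ref{cor:thetabase2}, in which the roles of $s_1$ and $t$ are interchanged. There each point of $\Theta_{s_1,\dots,s_r,t}$ is determined by its coordinate $u$, the remaining coordinates being forced to $v_\mu=\lceil(au-s_\mu)/m\rceil$ by the ceiling equivalence recalled in Section~\ref{sec:Bases}, so that
\[
\#\Theta_{s_1,\dots,s_r,t}=\#\Big\{\,u\in\mathbb{Z}\ \Big|\ u\geqslant -t,\ \psi(u)\leqslant s_1\,\Big\},\qquad \psi(u):=(a+bm)u+m\sum_{\mu=2}^r\Big\lceil\tfrac{au-s_\mu}{m}\Big\rceil.
\]
Now $t$ sits only in the lower bound $u\geqslant -t$, so exactly as in part (1) lowering $t$ by one removes at most the single point $u=-t$, and the count drops by $1$ iff $\psi(-t)\leqslant s_1$. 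Since $\#\Theta=\#\Omega$ by the corollary, the same increment holds for $\Omega$; substituting $u=-t$ and rearranging $\psi(-t)\leqslant s_1$ gives $m\sum_{\mu=2}^r\lceil(-at-s_\mu)/m\rceil\leqslant s_1+(a+bm)t$, the asserted condition.

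The only genuine idea is the observation in the first paragraph---that each set is parametrised by a single integer and that $(s_1,t)$ act as a lower/upper bound pair whose roles swap between the $\Omega$- and $\Theta$-pictures. After that the proof reduces to checking the two monotone inclusions and testing whether one explicit boundary point ($i=-s_1$, respectively $u=-t$) lies in the set; the main chore, simplifying the two ceiling expressions, is entirely routine.
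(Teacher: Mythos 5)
Your proposal is correct and follows essentially the same route as the paper: for part (1) you compare $\Omega_{s_1,\dots,t}$ with $\Omega_{s_1-1,\dots,t}$ and observe the difference set is the at-most-one forced point with $i=-s_1$, and for part (2) you pass to the $\Theta$-description of Corollary~\ref{cor:thetabase2} (where $t$ becomes the lower-bound parameter) and test the single candidate $u=-t$, exactly as the paper does with its sets $\Phi$ and $\Psi$. Your explicit framing via the functions $\phi$ and $\psi$ is only a cleaner packaging of the same argument.
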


\begin{proof}
	We begin with the first assertion.
	 Consider two lattice point sets $\Omega_{s_1,s_2, \cdots,s_r, t}$ and $\Omega_{s_1-1,s_2, \cdots,s_r, t}$, which are given in Equation~\eqref{eq:omegarceiljmu}. Clearly, the latter one is a subset of the former one, and the complementary set $ \Phi $ of $\Omega_{s_1-1,s_2, \cdots,s_r, t}$ in $\Omega_{s_1,s_2, \cdots,s_r, t}$ is given  by
	 	\begin{align*}\label{eq:omegarceiljmus}
	 	\Phi :
	 	 	 =&\Big\{ (i,j_2,\cdots,j_r) ~	\Big	|	~i+s_1= 0 ,\\
	 	 	 & ~ j_{\mu} = \left \lceil \frac{-i-s_{\mu}}{m} \right \rceil  ~~\text{for} ~~ \mu =2,\cdots, r,\nonumber \\
	 		&ri+m(j_2+\cdots+j_r)  \leqslant t \Big\} .
	 	\end{align*}
	 It follows immediately that the set $ \Phi $ is not empty if and only if
	 $-rs_1+m \sum\limits_{\mu=2 }^r\left\lceil \dfrac{s_1-s_{\mu}}{m}\right\rceil \leqslant t  $, which concludes the first assertion.

	 It follows from  Corollary~\ref{cor:thetabase2} that
	 the difference between $ \#	\Omega_{s_1, \cdots,s_r, t}$ and $ \#	 \Omega_{s_1,\cdots,s_r, t-1} $ is exactly
	 the same as the one between $ \#\Theta_{s_1,\cdots,s_r, t}$ and $\# \Theta_{s_1,\cdots,s_r, t-1}$.
	 Similar to the argument of the first assertion, we define $ \Psi $ as the
	 complementary set of $\Omega_{s_1,s_2, \cdots,s_r, t-1}$ in $\Omega_{s_1,s_2, \cdots,s_r, t}$, namely
	 			\begin{align*}
	 			\Psi:=\Big\{& (u,v_2,v_3,\cdots,v_r)~	\Big|~ u   = - t \\
	 			& ~ - (a+bm)u-m(v_2+\cdots+v_r) +s_1 \geqslant 0  , \nonumber \\
	 			&~ 0 \leqslant - au + mv_{\mu} +s_{\mu} < m ~~\text{for} ~~ \mu =2,\cdots, r	 			
	 			\Big\}.
	 			\end{align*}
 The set $ \Psi $ is not empty if and only if
$  m \sum\limits_{\mu=2 }^{r}\left\lceil \dfrac{-at-s_{\mu}}{m}\right\rceil
\leqslant s_1 + (a+bm)t $, which finishes the proof of the second assertion.	
\end{proof}

We are now ready for the main results of the section  dealing with the Weierstrass semigroups and the pure gap sets, which play an interesting role in finding codes with good parameters.
\begin{thm}
	\label{lem:Weierstsemi2}
	Let $ P_1,\cdots,P_l $ be the rational places defined previously. For $ 0 \leqslant l \leqslant r $, the following assertions hold.
	\begin{enumerate}
		\item $ H(P_1,\cdots,P_l)$ is given by
		\begin{align*}
		&\Big\{(s_1,\cdots,s_l)\in \mathbb{N}_0^l~\Big |
		~ m \sum\limits_{i=1 \atop i\neq j}^l\left\lceil \dfrac{s_j-s_i}{m}\right\rceil \\
		&+ m (r-l) \left\lceil \dfrac{s_j}{m}\right\rceil
		\leqslant rs_j
		 ~for~ all ~j,  1\leqslant j \leqslant l \Big\}.
		\end{align*}
		\item	
		$	 H(P_1,\cdots,P_{l},P_{\infty}) $ is given by
		 \begin{align*}
		 &\Big \{(s_1,\cdots,s_{l},t)\in \mathbb{N}_0^{l+1}~\Big |
		  ~ m \sum\limits_{i=2 }^{l}\left\lceil \dfrac{-at-s_i}{m}\right\rceil \\
		 & 	+ m (r-l) \left\lceil \dfrac{-at}{m}\right\rceil ~ \leqslant s_1 + (a+bm)t, \\
		 &~ m \sum\limits_{i=1 \atop i\neq j}^{l}\left\lceil \dfrac{s_j-s_i}{m}\right\rceil
			+ m (r-l) \left\lceil \dfrac{s_j}{m}\right\rceil
		 	\leqslant t+ rs_j, \\
		 &~for~ all ~j,  1\leqslant j \leqslant l
		  \Big\} .
		\end{align*}
		\item
	 The pure gap set $ G_0(P_1,\cdots,P_l)$ is given by
		\begin{align*}
		&\Big\{(s_1,\cdots,s_l)\in \mathbb{N}^l~\Big |
		~ m \sum\limits_{i=1 \atop i\neq j}^l\left\lceil \dfrac{s_j-s_i}{m}\right\rceil \\
		&+ m (r-l) \left\lceil \dfrac{s_j}{m}\right\rceil
		> rs_j
		~for~ all ~j,  1\leqslant j \leqslant l \Big\}.
		\end{align*}
				\item	
				The pure gap set $	G_0(P_1,\cdots,P_{l},P_{\infty}) $ is given by
				\begin{align*}
				&\Big \{(s_1,\cdots,s_{l},t)\in \mathbb{N}^{l}~\Big |
				~ m \sum\limits_{i=2 }^{l}\left\lceil \dfrac{-at-s_i}{m}\right\rceil \\
				& 	+ m (r-l) \left\lceil \dfrac{-at}{m}\right\rceil ~ > s_1 + (a+bm)t, \\
				&~ m \sum\limits_{i=1 \atop i\neq j}^{l}\left\lceil \dfrac{s_j-s_i}{m}\right\rceil
				+ m (r-l) \left\lceil \dfrac{s_j}{m}\right\rceil
				> t+ rs_j, \\
				&~for~ all ~j,  1\leqslant j \leqslant l
				\Big\} .
				\end{align*}
	\end{enumerate}
\end{thm}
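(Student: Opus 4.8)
The plan is to reduce each of the four membership conditions to an increment computation for $\#\Omega$ and then invoke Lemma~\ref{lem:omegawith1}. For a tuple under consideration I set $G=\sum_{i=1}^l s_iP_i$ in parts (1), (3) and $G=\sum_{i=1}^l s_iP_i+tP_\infty$ in parts (2), (4), and I view $G$ as the divisor $\sum_{\mu=1}^r s_\mu P_\mu+tP_\infty$ with the absent coordinates padded by zeros, that is $s_{l+1}=\cdots=s_r=0$ (and $t=0$ in parts (1), (3)). Theorem~\ref{thm:basis1} then gives $\ell(G)=\#\Omega_{s_1,\dots,s_l,0,\dots,0,t}$, so every comparison of Riemann--Roch dimensions turns into a comparison of cardinalities of lattice point sets.

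For the semigroup statements I would begin from Lemma~\ref{lem:Weierstsemi}: the tuple lies in $H$ exactly when $\ell(G)\neq\ell(G-Q_j)$ for each generating place $Q_j$. Removing one rational place decreases the dimension by at most one, so $\ell(G)-\ell(G-Q_j)\in\{0,1\}$; hence $\ell(G)\neq\ell(G-Q_j)$ is equivalent to the increment being exactly one, which is precisely the equality treated in Lemma~\ref{lem:omegawith1}. For a finite place $Q_j=P_j$ I first use the symmetry Lemma~\ref{lem:omeganoorder} to move $s_j$ into the first coordinate, without changing $t$, and then apply part~(1) of Lemma~\ref{lem:omegawith1}; the padded zeros $s_{l+1}=\cdots=s_r=0$ split off from the sum $m\sum_{\mu=2}^r\lceil(s_j-s_\mu)/m\rceil$ as the term $m(r-l)\lceil s_j/m\rceil$, yielding exactly the inequality $m\sum_{i\neq j}\lceil(s_j-s_i)/m\rceil+m(r-l)\lceil s_j/m\rceil\leqslant t+rs_j$. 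For $Q_j=P_\infty$ in part~(2) no reordering is needed: I apply part~(2) of Lemma~\ref{lem:omegawith1} directly, and the same zeros contribute $m(r-l)\lceil -at/m\rceil$ to $m\sum_{\mu=2}^r\lceil(-at-s_\mu)/m\rceil$, producing the first inequality of part~(2). Gathering the $l$ finite-place inequalities together with the single $P_\infty$-inequality gives the description of $H(P_1,\dots,P_l,P_\infty)$, and setting $t=0$ (i.e.\ dropping $P_\infty$) recovers part~(1).

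The pure gap statements follow along the very same path with the inequalities reversed. By the definition of the pure gap set $G_0$ recalled in Section~\ref{sec:Pre}, the tuple is a pure gap exactly when $\ell(G)=\ell(G-Q_j)$ for every generating place $Q_j$, i.e.\ when each relevant increment equals $0$ rather than $1$. This is the logical negation of the condition supplied by Lemma~\ref{lem:omegawith1}, so each ``$\leqslant$'' appearing in parts (1) and (2) is replaced by the strict reverse inequality ``$>$'', which gives parts (3) and (4) word for word.

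The routine computations are genuinely routine once this framework is set up; the step I expect to require the most care, and which I treat as the main obstacle, is the bookkeeping of the padded coordinates $s_{l+1}=\cdots=s_r=0$. One must check that these zero entries separate cleanly as a factor $(r-l)$ multiplying the ceiling of the baseline argument---$\lceil s_j/m\rceil$ for a finite place and $\lceil -at/m\rceil$ for the place at infinity---and that the symmetric reduction for the finite places is consistent with the distinguished role that the first coordinate plays both in the definition of $\Omega_{s_1,\dots,s_r,t}$ and in the two parts of Lemma~\ref{lem:omegawith1}. Once this separation is justified, all four parts fall out mechanically.
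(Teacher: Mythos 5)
Your proposal is correct and takes essentially the same route as the paper: the paper's proof is a one-line citation of Theorem~\ref{thm:basis1} together with Lemmas~\ref{lem:Weierstsemi}, \ref{lem:omeganoorder} and~\ref{lem:omegawith1}, and your write-up supplies exactly the intended glue (zero-padding the divisor to the full $r$-tuple, using symmetry to move $s_j$ into the distinguished first coordinate, identifying semigroup membership with the increment-equals-one criterion of Lemma~\ref{lem:omegawith1}, and negating that criterion to get the pure gap conditions).
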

\begin{proof}
	The desired conclusions follow from Theorem~\ref{thm:basis1},  Lemmas~\ref{lem:Weierstsemi},~\ref{lem:omeganoorder} and~\ref{lem:omegawith1}.
\end{proof}

In recent preprints by Abd\'{o}n \textit{et.al.} \cite{Masuda2,Miriam2015}, the authors determined the Weierstrass semigroup at totally ramified places. Here, we will give another proof by applying our previous results.
\begin{cor}[\cite{Masuda2,Miriam2015}]
	With notation as before, we have the following.
	\begin{enumerate}
 \item
$ H(P_1)= \Big\{\alpha \in \mathbb{N}_0 ~\Big|~-r\alpha+m(r-1) \left \lceil \dfrac{\alpha}{m} \right \rceil \leqslant 0~\Big \} $.\\
  \item
  $ H(P_{\infty})=\Big \{ mk+rj \in \mathbb{N}_0 ~\Big |~0 \leqslant j\leqslant m-1, k \geqslant 0\Big\} $.
  \item
 $G(P_1)$ is given by
  \begin{align*}
  & \Big\{ mk+j \in \mathbb{N}  ~\Big|~1\leqslant j \leqslant m-1-\left\lfloor \dfrac{m}{r} \right\rfloor,\\
  &~0 \leqslant k \leqslant r-2- \left \lfloor \dfrac{rj}{m}
  \right \rfloor ~\Big\}.
  \end{align*}

\item
$ G(P_{\infty})$ is given by
	\begin{align*}
   \Big\{&mk-rj \in \mathbb{N} ~\Big|~
	1\leqslant j \leqslant m-1-\left\lfloor \dfrac{m}{r} \right\rfloor,\\
	&~ \left\lceil\dfrac {rj}{m} \right\rceil \leqslant k \leqslant r-1
	\Big \}.
	\end{align*}

%\begin{align*}
%&\Big \{ mr-mk-rj ~\Big |~1 \leqslant j\leqslant m-1-\left\lfloor\dfrac{m}{r}\right\rfloor,\\
%&~ 1 \leqslant k\leqslant r-1-\left\lfloor\dfrac{rj}{m}\right\rfloor\Big \}.
%\end{align*}
\end{enumerate}	
\end{cor}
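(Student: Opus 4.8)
The plan is to obtain all four assertions as special cases of Theorem~\ref{lem:Weierstsemi2}, the bulk of the remaining work being elementary manipulation of floors and ceilings to put each characterization into the stated closed form.

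\textbf{Parts (1) and (3).} First I would specialize Theorem~\ref{lem:Weierstsemi2}(1) to $l=1$. Then the inner sum $\sum_{i=1,\,i\neq j}^{l}$ runs over the empty index set, so only the term $m(r-1)\lceil \alpha/m\rceil$ survives (writing $\alpha=s_1$), and membership reduces to $m(r-1)\lceil \alpha/m\rceil\leqslant r\alpha$, which is exactly the condition $-r\alpha+m(r-1)\lceil\alpha/m\rceil\leqslant 0$ of assertion (1). For (3) I would first note that at a single place the pure-gap condition and the ordinary-gap condition coincide, so $G(P_1)=G_0(P_1)$; specializing Theorem~\ref{lem:Weierstsemi2}(3) to $l=1$ then gives $G(P_1)=\{\alpha\in\mathbb{N}:m(r-1)\lceil\alpha/m\rceil>r\alpha\}$. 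Writing $\alpha=mk+j$ with $0\leqslant j\leqslant m-1$ and $k\geqslant 0$, the case $j=0$ (where $\lceil\alpha/m\rceil=k$) forces $k<0$ and so contributes nothing, while for $1\leqslant j\leqslant m-1$ one has $\lceil\alpha/m\rceil=k+1$ and the inequality becomes $mk<m(r-1)-rj$, that is $k\leqslant r-2-\lfloor rj/m\rfloor$; the requirement $k\geqslant 0$ then forces $j\leqslant m-1-\lfloor m/r\rfloor$, reproducing the two ranges in (3).

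\textbf{Part (2).} Rather than specialize the $s_1$-asymmetric formula of Theorem~\ref{lem:Weierstsemi2}(2), I would read $H(P_\infty)$ off the lattice-point count. By Theorem~\ref{thm:basis1} we have $\ell(tP_\infty)=\#\Omega_{0,\dots,0,t}$, so Lemma~\ref{lem:Weierstsemi} gives $t\in H(P_\infty)$ iff $\#\Omega_{0,\dots,0,t}>\#\Omega_{0,\dots,0,t-1}$. The bijection $\Omega_{0,\dots,0,t}\cong\Psi(t)$ from the proof of Lemma~\ref{lem:omega0000t} identifies this increment with the number of pairs $(I,k)$ obeying $0\leqslant I<m$, $k\geqslant 0$ and $rI+mk=t$; as in the proof of Lemma~\ref{lem:PsiR}, $\gcd(r,m)=1$ forces a unique $I\in[0,m)$ solving the congruence for each $t$, and the increment is $1$ exactly when the associated $k$ is nonnegative. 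Relabelling $I$ as $j$ yields $H(P_\infty)=\{mk+rj:0\leqslant j\leqslant m-1,\ k\geqslant 0\}$. As a quick independent check, the functions $x-\alpha_1$ and $z$ are regular away from $P_\infty$ with pole orders $m$ and $r$ there, so $\langle m,r\rangle\subseteq H(P_\infty)$; since $\langle m,r\rangle$ already has $(m-1)(r-1)/2=g$ gaps, the inclusion must be an equality.

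\textbf{Part (4) and the main obstacle.} Finally $G(P_\infty)=\mathbb{N}\setminus H(P_\infty)$, which I would compute directly from the description in (2). Every positive integer lies in the class $rj_0\pmod m$ for a unique $0\leqslant j_0\leqslant m-1$, and $rj_0$ is the least element of $\langle m,r\rangle$ in that class; hence the gaps in this class are exactly $rj_0-m\ell$ with $1\leqslant\ell\leqslant\lceil rj_0/m\rceil-1$. Substituting $j:=m-j_0$ and $k:=r-\ell$ turns $rj_0-m\ell$ into $mk-rj$, and using $\gcd(m,r)=1$ (so that $rj/m\notin\mathbb{Z}$) one checks that the index ranges become precisely $1\leqslant j\leqslant m-1-\lfloor m/r\rfloor$ and $\lceil rj/m\rceil\leqslant k\leqslant r-1$, which is assertion (4); a gap count confirms both descriptions list the same $g$ values without repetition. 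The reductions themselves are immediate once Theorem~\ref{lem:Weierstsemi2} is available, so the only genuine labour is precisely this floor/ceiling bookkeeping in (3) and (4) — fixing the exact upper range of $j$ and verifying that the reparametrization in (4) is a bijection onto the claimed box — where the coprimality $\gcd(m,r)=1$ is what keeps all the ceilings and floors consistent.
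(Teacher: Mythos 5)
Your proposal is correct, and for assertions (1) and (3) it follows the paper's own route almost verbatim: specialize Theorem~\ref{lem:Weierstsemi2} to $l=1$ (using, as the paper implicitly does, that the gap set and the pure gap set coincide at a single place), write $\alpha=mk+j$, rule out $j=0$, and do the floor/ceiling bookkeeping; your inequality $mk<m(r-1)-rj$ and the resulting ranges are exactly the paper's computation. Where you genuinely diverge is at $P_\infty$. The paper treats (2) and (4) uniformly through the B\'ezout-coefficient inequality obtained from Theorem~\ref{lem:Weierstsemi2}: it records $H(P_\infty)$ as the set of $t$ with $t-(r-1)at-m(r-1)\left\lceil -at/m\right\rceil\geqslant 0$ (its Equation~\eqref{eq:HPinf}), then substitutes $t=mk+rj$ together with $\left\lceil -at/m\right\rceil=bj-ak$ for (2), and $t=mk-rj$ for (4). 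You instead bypass the coefficients $a,b$ entirely: for (2) you read the pole numbers off the increment $\#\Omega_{0,\cdots,0,t}-\#\Omega_{0,\cdots,0,t-1}$ via Theorem~\ref{thm:basis1}, Lemma~\ref{lem:Weierstsemi} and the bijection $\Omega_{0,\cdots,0,t}\cong\Psi(t)$ (which, as you tacitly use, the proof of Lemma~\ref{lem:omega0000t} establishes for every $t$, the hypothesis $t\geqslant rm$ entering only in the counting formula of Lemma~\ref{lem:PsiR}), or even more directly from the pole orders $m$ and $r$ of $x-\alpha_1$ and $z$ plus a gap count; for (4) you take the complement of the numerical semigroup $\langle m,r\rangle$ by locating the least semigroup element $rj_0$ in each residue class modulo $m$ and reparametrizing. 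Both routes are sound; yours makes it conceptually transparent that $H(P_\infty)$ is nothing but $\langle m,r\rangle$ and replaces the somewhat opaque manipulations with $a$ and $b$ by an Ap\'ery-set style argument, while the paper's has the virtue of exercising its general theorem identically in all four assertions, so that (2) and (4) come out as literal specializations rather than as separate semigroup arguments.
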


\begin{proof}
	The first assertion follows from Theorem~\ref{lem:Weierstsemi2}.
%	$ H(P_1)= \left\{\alpha \in \mathbb{N}_0 ~\Big|~
%	\ell(\alpha P_1) \neq \ell ((\alpha-1)P_1)
%	\right \} $.
%	Thus we consider two lattice points sets $\Omega_{\alpha,0,\cdots,0}$ and $\Omega_{\alpha-1,0,\cdots,0}$, which are given in Equation~\eqref{eq:omegarceiljmu}. Clearly, the latter is a proper subset of the former, and their difference set is denoted by $ \Phi $, as is given below
%	\begin{align*}\label{eq:omegarceiljmus}
%	\Phi :=& \Omega_{\alpha,0,\cdots,0} \backslash \Omega_{\alpha,0,\cdots,0} \\
%	 	 =&\Big\{ (i,j_2,\cdots,j_r) ~	\Big	|	~i= -\alpha , ~ j_{\mu} = \left \lceil \frac{\alpha}{m} \right \rceil  ~~\text{for} ~~ \mu =2,\cdots, r,\nonumber \\
%		&ri+m(j_2+\cdots+j_r)  \leqslant 0 \Big\}  .
%	\end{align*}
%It follows immediately that the set $ \Phi $ is not empty if and only if
%$-r\alpha+m(r-1) \left \lceil \dfrac{\alpha}{m} \right \rceil \leqslant 0  $, which concludes the first assertion.

We now focus on the second assertion. It follows from Theorem~\ref{lem:Weierstsemi2} that the Weierstrass semigroup $ H(P_{\infty}) $ can be expressed by
%   We observe that
%	$ H(P_{\infty})= \left\{t \in \mathbb{N}_0 ~\Big|~
%	\ell(t P_{\infty}) \neq \ell ((t-1)P_{\infty})
%	\right \} $ and
%	\begin{align*}
%	\Phi' :=& \Omega_{0,\cdots,0, t} \backslash \Omega_{0,\cdots,0,t-1} \\
%	=&\Big\{ (i,j_2,\cdots,j_r) ~	\Big	|	~i\geqslant 0 ,
%	~0\leqslant i+m j_{\mu} < m,  ~~\text{for} ~~ \mu =2,\cdots, r,\nonumber \\
%	&-ri-m(j_2+\cdots+j_r)  = -t \Big\}  .
%	\end{align*}
%Note that $ \gcd(r,m)=1 $, say $ ar+bm=1 $ for integers $a$ and $b$. We make a transformation from $ (i,j_2,\cdots,j_r) $  to $ (u,v_2,\cdots,v_r) $ that
%	\begin{align*}
%	\left\{\begin{array}{ll}
%	  & i=-(a+bm)u-m(v_2+\cdots+v_r),\\
%	       & i+mj_{\mu} = - au +m v_{\mu}, \mu =2,\cdots, r,\\
%	       & -ri-m(j_2+\cdots+j_r)  = u,
%	       \end{array}
%	\right.
%	\end{align*}
%which leads to
%	\begin{align*}
%	\Phi' =&\Big\{ (u,v_2,\cdots,v_r) ~	\Big	|	
%	~u=-t,~~v_{\mu} = \left\lceil\frac{au}{m}\right\rceil,  ~~\text{for} ~~ \mu =2,\cdots, r,\nonumber \\
%	&~~-(a+bm)u-m(v_2+\cdots+v_r)\geqslant 0   \Big\}.
%	\end{align*}	
%The set $ \Phi $ is not empty if and only if
%$ (a+bm)t-m(r-1) \left\lceil\dfrac{-at}{m}\right\rceil \geqslant 0 $. This implies that
	\begin{equation}\label{eq:HPinf}
	  \Big\{t \in \mathbb{N}_0 ~\Big|~
t-(r-1)at-m(r-1) \left\lceil\dfrac{-at}{m}\right\rceil \geqslant 0
\Big \} ,
  \end{equation}
 where $ ar+ bm=1 $ for integers $a$ and $b$ since $ \gcd(r,m)=1 $. We write the element $t$ in $ H(P_{\infty}) $ as $t=mk+rj$, where
 $0\leqslant j \leqslant m-1$. Then
 $ \left\lceil\dfrac{-at}{m}\right\rceil = bj-ak $. Substituting this to Equation~\eqref{eq:HPinf}, we can deduce that
 	\begin{align*}
 	H(P_{\infty})&= \Big\{mk+rj \in \mathbb{N}_0 ~\Big|~
 	0\leqslant j \leqslant m-1,~~ j+km\geqslant 0
 	\Big \}\\
 	&= \Big\{mk+rj \in \mathbb{N}_0 ~\Big|~
 	0\leqslant j \leqslant m-1, ~~k\geqslant 0
 	\Big \},
 	\end{align*} which concludes the second assertion.
 	
 	It follows from the third assertion of Theorem~\ref{lem:Weierstsemi2} that the Weierstrass gap set at $P_1$ is
 	\begin{equation}\label{eq:GP1}
 	G(P_1)=\Big\{\alpha \in \mathbb{N} ~\Big|~-r\alpha+m(r-1) \left \lceil \dfrac{\alpha}{m} \right \rceil > 0~\Big \} .
 	\end{equation}
  Let $ \alpha \in G(P_1) $ and write $ \alpha=mk+j $, where $ 0 \leqslant j \leqslant m-1 $ and $ k \geqslant 0 $.
 	We find that the case $ j=0 $ does not occur, since otherwise $ -r\alpha+m(r-1) \left \lceil \dfrac{\alpha}{m} \right \rceil = -rkm+m(r-1)k=- \alpha < 0$, which contradict to $ \alpha \in \mathbb{N} $. So $ 1 \leqslant j \leqslant m-1 $.
 	
 	Substituting $ \alpha=mk+j $ to the condition of Equation~\eqref{eq:GP1}, we obtain that
 	\begin{align*}
 	&-r\alpha+m(r-1) \left \lceil \dfrac{\alpha}{m} \right \rceil \\
 	& =-r(mk+j)+m(r-1)(k+1)\\
 	& = -rj-mk+mr-m >0.
 	\end{align*}
 	This yields that $ 0\leqslant k \leqslant \left \lfloor r-1-\dfrac{rj}{m} \right \rfloor =r-2- \left \lfloor \dfrac{rj}{m} \right \rfloor  $, since $ m $ does not divide $ rj $.
 	For the existence of $ k $, we require $  r-1-\dfrac{rj}{m} \geqslant 0 $, which leads to $ 1\leqslant j \leqslant m-1- \left \lfloor \dfrac{m}{r} \right \rfloor$. Thus
 	the Weierstrass gap set $ G(P_1) $ equals
 	\begin{align*}
 	\Big\{& mk+j \in \mathbb{N}~\Big|~1\leqslant j \leqslant m-1-\left\lfloor \dfrac{m}{r} \right\rfloor,\\
 	&~0 \leqslant k \leqslant r-2- \left \lfloor \dfrac{rj}{m}
 	\right \rfloor ~\Big\},
 	\end{align*}
  	 and we get the desired conclusion.

 	 From Equation~\eqref{eq:HPinf}, we have
 	 \begin{equation*}
 	  G(P_{\infty})= \Big\{t \in \mathbb{N} ~\Big|~
 	 t-(r-1)at-m(r-1) \left\lceil\dfrac{-at}{m}\right\rceil < 0
 	 \Big \} .
 	 \end{equation*}
 	 Write $t=mk-rj$, where $0\leqslant j \leqslant m-1$. In a similar way as done previously for the third assertion, we see that $1\leqslant j \leqslant m-1$. Therefore, substituting $t=mk-rj$ to the inequality of $ G(P_{\infty}) $, we have $ k\leqslant r-1 $.
 	 Since $ mk-rj \in \mathbb{N} $, $ 1\leqslant j \leqslant m-1$ and
 	 $ k \leqslant r-1 $, we obtain $ \left\lceil\dfrac {rj}{m} \right\rceil \leqslant k \leqslant r-1 $. Also note that $ \dfrac {rj}{m}  < r-1  $, which yields that $ 1\leqslant j \leqslant m-1-\left\lfloor \dfrac{m}{r} \right\rfloor $. This completes the proof of the last assertion.
\end{proof}

The following is an immediate consequence of Theorem~\ref{lem:Weierstsemi2} which generalizes the results of Theorems 3.2 and 4.4 in~\cite{Masuda2}.
\begin{cor} \label{cor:semigrouppuregap}The following assertions hold.
	\begin{enumerate}
		\item
		$ H(P_1,P_2) $ is given by
		\begin{align*}
		&\Big\{ (k,l)\in \mathbb{N}_0^2 ~	\Big	|	
		~ m\left \lceil\dfrac{k-l}{m}\right\rceil + m(r-2)\left \lceil\dfrac{k}{m}\right\rceil \leqslant kr, \\
		&~~ m\left \lceil\dfrac{l-k}{m}\right\rceil + m(r-2)\left \lceil\dfrac{l}{m}\right\rceil \leqslant lr  \Big\}.
		\end{align*}
		\item
		$ G_0(P_1,P_2) $ is given by
		\begin{align*}
		& \Big\{ (k,l)\in \mathbb{N}^2 ~	\Big	|	
		~ m\left \lceil\dfrac{k-l}{m}\right\rceil + m(r-2)\left \lceil\dfrac{k}{m}\right\rceil > kr, \\
		&~ m\left \lceil\dfrac{l-k}{m}\right\rceil + m(r-2)\left \lceil\dfrac{l}{m}\right\rceil > lr  \Big\}.
		\end{align*}
		\item
		$ H(P_1,P_{\infty}) $ is given by
		\begin{align*}
		&\Big\{ (k,t)\in \mathbb{N}_0^2 ~	\Big	|	
		~ m(r-1)\left \lceil\dfrac{k}{m}\right\rceil  \leqslant t+kr, \\
		&~~ m(r-1)\left \lceil\dfrac{-at}{m}\right\rceil  \leqslant k+(a+bm)t  \Big\}.
		\end{align*}
		\item
		$ G_0(P_1,P_{\infty}) $ is given by
		\begin{align*}
		&\Big\{ (k,t)\in \mathbb{N}^2 ~	\Big	|	
		~ m(r-1)\left \lceil\dfrac{k}{m}\right\rceil  > t+kr, \\
		&~~ m(r-1)\left \lceil\dfrac{-at}{m}\right\rceil  > k+(a+bm)t  \Big\}.
		\end{align*}
		                     	
	\end{enumerate}
\end{cor}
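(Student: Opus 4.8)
The plan is to derive Corollary~\ref{cor:semigrouppuregap} directly as the specialization of Theorem~\ref{lem:Weierstsemi2} to the case $l=2$ (for parts (1) and (2), involving $P_1,P_2$) and to the case $l=1$ with the place at infinity adjoined (for parts (3) and (4), involving $P_1,P_\infty$). Since Theorem~\ref{lem:Weierstsemi2} already gives closed-form membership criteria for $H(P_1,\dots,P_l)$, $H(P_1,\dots,P_l,P_\infty)$, $G_0(P_1,\dots,P_l)$ and $G_0(P_1,\dots,P_l,P_\infty)$ in terms of ceiling inequalities, the entire task reduces to substituting the appropriate small value of $l$ and simplifying the sums. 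No new structural idea is needed; the work is purely a matter of reading off the general formula at $l=1,2$.

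First I would establish parts (1) and (2). Setting $l=2$ and writing $(s_1,s_2)=(k,l)$ in the first assertion of Theorem~\ref{lem:Weierstsemi2}, the inner sum $\sum_{i\neq j}$ over $i\in\{1,2\}$ collapses to a single term, so the condition for index $j=1$ becomes $m\lceil (k-l)/m\rceil + m(r-2)\lceil k/m\rceil \leqslant rk$ and the condition for $j=2$ becomes $m\lceil (l-k)/m\rceil + m(r-2)\lceil l/m\rceil \leqslant rl$; these are precisely the two inequalities stated in part (1). Part (2) is the negation of these inequalities together with the restriction to $\mathbb N^2$, which is exactly what the third assertion of Theorem~\ref{lem:Weierstsemi2} yields at $l=2$. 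I would note along the way that the role of the term $m(r-l)\lceil s_j/m\rceil$ in the general theorem becomes the $m(r-2)\lceil\cdot/m\rceil$ factor here, which is the only place the parameter $r$ enters.

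Next I would handle parts (3) and (4) by specializing the second and fourth assertions of Theorem~\ref{lem:Weierstsemi2} to $l=1$, renaming $(s_1,t)=(k,t)$. When $l=1$ the sum $\sum_{i=2}^{l}$ over the second condition is empty, so the first inequality of assertion (2) reduces to $m(r-1)\lceil -at/m\rceil \leqslant k+(a+bm)t$; likewise the sum $\sum_{i\neq j}^{l}$ with $j=1$ is empty, so the second inequality reduces to $m(r-1)\lceil k/m\rceil \leqslant t+kr$. These two are exactly the defining inequalities of part (3), and part (4) follows identically from assertion (4) of the theorem by reversing both inequalities and restricting to $\mathbb N^2$. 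I would keep $a,b$ as the fixed integers with $ar+bm=1$ guaranteed by $\gcd(r,m)=1$, matching the convention already set in Corollary~\ref{cor:thetabase2}.

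There is essentially no genuine obstacle here: the statement is a mechanical corollary, and the only points requiring minor care are bookkeeping ones—correctly collapsing the empty or singleton index sets when $l$ is small, and verifying that the coefficient $m(r-l)$ instantiates to $m(r-2)$ and $m(r-1)$ in the respective cases. If any subtlety arises, it will be in confirming that the empty sum $\sum_{i=2}^{1}$ in assertion (2) contributes nothing (so that only the $m(r-l)\lceil -at/m\rceil$ term survives in the first inequality of part (3)), which is immediate. Accordingly the proof is simply the remark that each assertion is the stated instance of Theorem~\ref{lem:Weierstsemi2}.
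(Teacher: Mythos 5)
Your proposal is correct and matches the paper exactly: the paper presents this corollary as an immediate consequence of Theorem~\ref{lem:Weierstsemi2}, obtained precisely by specializing assertions (1) and (3) at $l=2$ and assertions (2) and (4) at $l=1$, with the index sums collapsing to a single term or the empty sum as you describe. Your bookkeeping of the coefficient $m(r-l)$ becoming $m(r-2)$ and $m(r-1)$ is also exactly what the paper's statement reflects.
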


\section{The floor of divisors }

In this section, we investigate the floor of divisors of function fields. The
significance of this concept is that it provides a useful
tool for evaluating parameters of AG codes, as well as pure gaps. We begin with
general function fields.
\begin{defn}[\cite{Maharaj2005riemann}]
	Given a divisor $ G $ of a function field $ F/\mathbb{F}_q $ with $ \ell(G)>0 $, the floor of $ G $ is the unique divisor $ G' $ of $ F $ of minimum degree such that $ \mathcal{L}(G)=\mathcal{L}(G') $. The floor of $ G $ will be denoted by $ \lfloor G \rfloor $.
\end{defn}
The floor of a divisor can be used to characterize Weierstrass semigroups and pure gap sets. Let $ G=s_1Q_1 + \cdots+ s_lQ_l $. It is not hard to see that $ (s_1,\cdots,s_l) \in H (Q_1,\cdots,Q_l)$ if and only if $ \lfloor G \rfloor=G $. Moreover,  $ (s_1,\cdots,s_l) $ is a pure gap at $ (Q_1,\cdots,Q_l) $ if and only if
\begin{equation*}
\lfloor G \rfloor= \lfloor (s_1-1)Q_1 + \cdots+ (s_l-1)Q_l \rfloor.
\end{equation*}

Maharaj, Matthews and Pirsic in~\cite{Maharaj2005riemann} defined the floor of a divisor and characterized it by the basis of the Riemann-Roch space.
\begin{thm}[\cite{Maharaj2005riemann}]\label{thm:floorofG}
	Let $ G $ be a divisor of the function field
	$ F/\mathbb{F}_q $ and let $ b_1, \cdots, b_t \in  \mathcal{L}(G)$ be a spanning set for $ \mathcal{L}(G)$. Then
	\begin{equation*}
	\lfloor G \rfloor=-\gcd\Big\{(b_i)~\Big|~i=1,\cdots,t\Big\}.
	\end{equation*}
\end{thm}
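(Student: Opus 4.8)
The plan is to prove the two divisor inequalities $\lfloor G \rfloor \geqslant -\gcd\{(b_i)\}$ and $\lfloor G \rfloor \leqslant -\gcd\{(b_i)\}$ separately, and then invoke the uniqueness built into the definition of the floor to conclude equality. Throughout I write $D := -\gcd\{(b_i) \mid i=1,\dots,t\}$, the proposed candidate, and I observe at the outset that because each $b_i \in \mathcal{L}(G)$ we have $(b_i) + G \geqslant 0$, i.e.\ $(b_i) \geqslant -G$; taking the greatest common divisor over all $i$ preserves this bound, so $\gcd\{(b_i)\} \geqslant -G$, which rearranges to $D \leqslant G$. This is the basic compatibility fact I will lean on.

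First I would establish that $\mathcal{L}(D) = \mathcal{L}(G)$. For the inclusion $\mathcal{L}(G) \subseteq \mathcal{L}(D)$, take an arbitrary $f \in \mathcal{L}(G)$ and expand it in the spanning set as $f = \sum_i a_i b_i$ with $a_i \in \mathbb{F}_q$. For every place $P$, the valuation satisfies $v_P(f) \geqslant \min_{a_i \neq 0} v_P(b_i) \geqslant \min_i v_P(b_i)$, and by definition of the gcd this last quantity is exactly the coefficient of $P$ in $\gcd\{(b_i)\}$, namely $-v_P(D)$. Hence $(f) \geqslant \gcd\{(b_i)\} = -D$, which says precisely $f \in \mathcal{L}(D)$. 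The reverse inclusion $\mathcal{L}(D) \subseteq \mathcal{L}(G)$ is immediate from $D \leqslant G$, since shrinking the divisor shrinks the Riemann--Roch space. Combining the two gives $\mathcal{L}(D) = \mathcal{L}(G)$.

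Next, since $\mathcal{L}(D) = \mathcal{L}(G)$ and $\lfloor G \rfloor$ is by definition the \emph{unique} divisor of minimum degree whose Riemann--Roch space equals $\mathcal{L}(G)$, we immediately get $\deg \lfloor G \rfloor \leqslant \deg D$ together with $\mathcal{L}(\lfloor G \rfloor) = \mathcal{L}(G) = \mathcal{L}(D)$. It remains to upgrade this degree comparison to the divisor identity $\lfloor G \rfloor = D$. For this I would show $\lfloor G \rfloor \geqslant D$ as divisors: each $b_i$ lies in $\mathcal{L}(G) = \mathcal{L}(\lfloor G \rfloor)$, so $(b_i) \geqslant -\lfloor G \rfloor$ for every $i$, and taking the gcd yields $\gcd\{(b_i)\} \geqslant -\lfloor G \rfloor$, i.e.\ $\lfloor G \rfloor \geqslant D$. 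But then $\deg \lfloor G \rfloor \geqslant \deg D$, which combined with the minimality inequality $\deg \lfloor G \rfloor \leqslant \deg D$ forces $\deg \lfloor G \rfloor = \deg D$. A divisor inequality $\lfloor G \rfloor \geqslant D$ together with equality of degrees forces $\lfloor G \rfloor = D$, completing the proof.

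The step I expect to require the most care is the valuation estimate $v_P(f) \geqslant \min_i v_P(b_i)$ and its correct translation into the coefficients of $\gcd\{(b_i)\}$. The subtlety is that the ultrametric inequality only guarantees $v_P(\sum a_i b_i) \geqslant \min_{a_i\neq 0} v_P(b_i)$, with possible strict inequality when leading terms cancel; the argument nonetheless goes through because I only need a lower bound, and the gcd is defined place-by-place as the minimum of the $v_P(b_i)$, so no cancellation issue can hurt the inequality. Keeping the direction of every inequality consistent when passing between the language of ``$(f) \geqslant -D$'' and ``$f \in \mathcal{L}(D)$'' is the main bookkeeping hazard, but no deep idea beyond the definition of the floor and the ultrametric property of valuations is needed.
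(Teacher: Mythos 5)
Your proof is correct. Note, however, that the paper contains no proof of this statement to compare against: it is imported verbatim from~\cite{Maharaj2005riemann} as a known result. Your argument --- setting $D := -\gcd\{(b_i)\}$, proving $\mathcal{L}(D)=\mathcal{L}(G)$ (one inclusion from the ultrametric inequality applied to expansions in the spanning set, the other from $D\leqslant G$), and then forcing $\lfloor G\rfloor = D$ by combining the divisor inequality $\lfloor G\rfloor \geqslant D$ (which follows from $b_i\in\mathcal{L}(\lfloor G\rfloor)$) with the degree minimality in the definition of the floor --- is complete, and it is essentially the standard proof given in that reference; the only implicit hypothesis you rely on, namely $\ell(G)>0$ so that the $b_i$ are not all zero and the floor is defined, is built into the paper's definition of the floor.
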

The next theorem extends Theorem~\ref{thm:Goppagener}, which shows the lower bound of minimal distance in a more general situation.

\begin{thm}[\cite{Maharaj2005riemann}]\label{thm:Maharajfloor}
	Let $ F/\mathbb{F}_q $ be a function field of genus $ g $.
	Let $ D:=Q_1+\cdots+Q_n $ where $ Q_1,\cdots, Q_n $ are distinct rational places of $ F $, and let $ G:= H+\lfloor H \rfloor $ be a divisor of $ F $
	such that $ H $ is an effective divisor whose support does not contain any of the places $ Q_1,\cdots, Q_n $. Then the distance of $ C_{\Omega} $ satisfies
	\begin{align*}
	d_{\Omega} \geqslant 2 \deg(H)-(2g-2).
	\end{align*}
\end{thm}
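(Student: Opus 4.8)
The plan is to work directly with a minimum-weight codeword of $C_\Omega:=C_\Omega(D,G)$ and to extract the factor $2\deg(H)$ from the defining identity of the floor, namely $\mathcal{L}(H)=\mathcal{L}(\lfloor H\rfloor)$. The first thing I would record is a structural observation used throughout: since $\supp(\lfloor H\rfloor)\subseteq\supp(H)$ and $\supp(H)$ avoids $Q_1,\dots,Q_n$, the support of $G=H+\lfloor H\rfloor$ is disjoint from $D$, and $\lfloor H\rfloor$ is effective. I would also note at the outset that the purely ``Goppa plus ceiling'' reductions (replacing $G$ by the largest divisor giving the same code) only recover $\deg(G)-(2g-2)$, so the codeword analysis is genuinely needed.

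Next I would fix a nonzero codeword of minimal weight $w=d_\Omega$, arising from a Weil differential $\eta\in\Omega(G-D)$ whose residues vanish off a set $T$ of $w$ places; put $D_T:=\sum_{i\in T}Q_i$. Because $\supp(G)\cap\{Q_1,\dots,Q_n\}=\varnothing$, at every $Q_j\notin T$ the differential has at most a simple pole and zero residue, hence is regular there. This upgrades membership to $\eta\in\Omega(G-D_T)$, i.e. $(\eta)\geqslant H+\lfloor H\rfloor-D_T$.

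Now I would set up the key linear map. For $f\in\mathcal{L}(H)=\mathcal{L}(\lfloor H\rfloor)$ one has $(f)\geqslant-\lfloor H\rfloor$, so $(f\eta)\geqslant-\lfloor H\rfloor+(H+\lfloor H\rfloor-D_T)=H-D_T$; thus $f\mapsto f\eta$ embeds $\mathcal{L}(H)$ into $\Omega(H-D_T)$, giving $\ell(H)\leqslant\dim\Omega(H-D_T)=\ell(W-H+D_T)$. Applying the Riemann--Roch theorem to both $H$ and $W-H+D_T$ (a routine rearrangement) converts this into
\[
w\ \geqslant\ 2\deg(H)-(2g-2)+\bigl(\dim\Omega(H)-\ell(H-D_T)\bigr).
\]

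The main obstacle is the correction term $\dim\Omega(H)-\ell(H-D_T)$, which a priori could be negative; this is exactly the place where the floor must do real work, since merely invoking $\mathcal{L}(H)=\mathcal{L}(\lfloor H\rfloor)$ to set up the map is not enough. I would resolve it by choosing the canonical divisor to be $(\eta)$ itself: as $(\eta)-H\geqslant\lfloor H\rfloor-D_T$, we get $\dim\Omega(H)=\ell\bigl((\eta)-H\bigr)\geqslant\ell(\lfloor H\rfloor-D_T)$. Finally, because $T$ is disjoint from $\supp(H)\supseteq\supp(\lfloor H\rfloor)$, passing from $\mathcal{L}(H)=\mathcal{L}(\lfloor H\rfloor)$ to $\mathcal{L}(H-D_T)$ and $\mathcal{L}(\lfloor H\rfloor-D_T)$ only imposes the \emph{same} vanishing conditions $v_{Q_i}(f)\geqslant 1$ $(i\in T)$ on these two equal spaces, so $\ell(\lfloor H\rfloor-D_T)=\ell(H-D_T)$. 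Hence $\dim\Omega(H)\geqslant\ell(H-D_T)$, the correction term is nonnegative, and $w\geqslant 2\deg(H)-(2g-2)$, which is the assertion.
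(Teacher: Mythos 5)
The paper contains no proof of this statement: Theorem~\ref{thm:Maharajfloor} is quoted from \cite{Maharaj2005riemann} and used as a black box, so there is no internal argument to compare yours against, and I can only judge your proposal on its own merits and against the cited source. It is correct and complete. Your preliminary reductions are legitimate: $1\in\mathcal{L}(H)=\mathcal{L}(\lfloor H\rfloor)$ gives $\lfloor H\rfloor\geqslant 0$, and $\lfloor H\rfloor\leqslant H$ (immediate, for instance, from Theorem~\ref{thm:floorofG}) gives $\supp(\lfloor H\rfloor)\subseteq\supp(H)$, so $G$ is disjoint from $D$ and $T$ avoids both supports. The passage from $\eta\in\Omega(G-D)$ to $\eta\in\Omega(G-D_T)$, the injection $f\mapsto f\eta$ of $\mathcal{L}(\lfloor H\rfloor)$ into $\Omega(H-D_T)$, and the Riemann--Roch rearrangement yielding $w\geqslant 2\deg(H)-(2g-2)+\bigl(\dim\Omega(H)-\ell(H-D_T)\bigr)$ all check out, and you correctly identified that the entire difficulty sits in the sign of the correction term. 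Your resolution---taking the canonical divisor to be $(\eta)$ so that $\dim\Omega(H)=\ell\bigl((\eta)-H\bigr)\geqslant\ell(\lfloor H\rfloor-D_T)$, then noting that $D_T$ imposes the identical vanishing conditions on the single space $\mathcal{L}(H)=\mathcal{L}(\lfloor H\rfloor)$, whence $\ell(\lfloor H\rfloor-D_T)=\ell(H-D_T)$---is sound, and it is essentially the mechanism of the original proof in \cite{Maharaj2005riemann}, where the floor hypothesis likewise enters twice (once to set up the multiplication map, once to control the index of specialty). Your opening caveat is also well taken: one cannot shortcut the argument by claiming $C_{\Omega}(D,H+\lfloor H\rfloor)=C_{\Omega}(D,2H)$ and invoking the Goppa bound on the latter, since $\mathcal{L}(2H)$ can strictly contain $\mathcal{L}(H+\lfloor H\rfloor)$ (already on a genus-one curve with $H=P$ a rational place one has $\lfloor P\rfloor=0$ yet $\ell(2P)=2>1=\ell(P)$), so the codeword-level analysis you perform is genuinely necessary.
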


To finish this section, we provide a characterization of the floor over Kummer extensions. The following theorem is a generalization of Theorem 3.9 in~\cite{Maharaj2005riemann} related to Hermitian function fields.
\begin{thm}\label{thm:floorofH}
	Let $ H :=s_1P_1+s_2P_2+\cdots+s_rP_r+tP_{\infty} $ be a divisor of the Kummer extension given by~\eqref{eq:Kumext1}. Then the floor of $ H $ is given by
	\begin{align*}
	\lfloor H \rfloor = s_1'P_1+s_2'P_2+\cdots+s_r'P_r+t'P_{\infty},
	\end{align*}
	where
	\begin{align*}
	s_1'&=\max\Big\{-i ~\Big|~ (i,j_2,\cdots,j_r)\in \Omega_{s_1,\cdots,s_r,t}\Big\},\\
	s_{\mu}'&= \max\Big\{ -i-mj_{\mu}~\Big|~(i,j_2,\cdots,j_r)\in \Omega_{s_1,\cdots,s_r,t}\Big\},\\
	&\hspace{4cm}
	 \text{for} ~\mu = 2,\cdots, r,\\
	t'& = \max\Big\{ri+m\sum_{\mu=2}^{r}j_{\mu}~\Big|~(i,j_2,\cdots,j_r)\in \Omega_{s_1,\cdots,s_r,t}\Big\}.
	\end{align*}
\end{thm}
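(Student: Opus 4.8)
The strategy is to apply Theorem~\ref{thm:floorofG}, which characterizes the floor as the negative of the greatest common divisor of the divisors of a spanning set of $\mathcal{L}(H)$. From Theorem~\ref{thm:basis1}, the elements $E_{i,j_2,\cdots,j_r}$ with $(i,j_2,\cdots,j_r)\in\Omega_{s_1,\cdots,s_r,t}$ form such a spanning set, so I would write
\[
\lfloor H\rfloor = -\gcd\Big\{(E_{i,j_2,\cdots,j_r})~\Big|~(i,j_2,\cdots,j_r)\in\Omega_{s_1,\cdots,s_r,t}\Big\}.
\]
Since all the divisors $(E_{i,j_2,\cdots,j_r})$ are supported on the totally ramified places $P_1,\cdots,P_r,P_\infty$, the greatest common divisor is computed coefficient-by-coefficient at each place: the coefficient of the gcd at a place is the minimum over the spanning set of the coefficients there. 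Passing to $-\gcd$ turns each minimum into a maximum of the negated coefficients, which is exactly the shape of the claimed formula.

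The key computation is to read off the coefficients at each place from Equation~\eqref{eq:E}, which gives
\[
(E_{i,j_2,\cdots,j_r})=iP_1+\sum_{\mu=2}^r(i+mj_\mu)P_\mu-\Big(ri+m\sum_{\mu=2}^r j_\mu\Big)P_\infty.
\]
First I would treat $P_1$: the coefficient is $i$, so the gcd coefficient at $P_1$ is $\min\{i\}$ over $\Omega_{s_1,\cdots,s_r,t}$, and hence the coefficient of $\lfloor H\rfloor$ at $P_1$ is $-\min\{i\}=\max\{-i\}$, matching $s_1'$. The same reasoning at $P_\mu$ uses the coefficient $i+mj_\mu$ and yields $s_\mu'=\max\{-i-mj_\mu\}$, while at $P_\infty$ the coefficient is $-(ri+m\sum_\mu j_\mu)$, so $t'=\max\{ri+m\sum_\mu j_\mu\}$. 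I would note that the four maxima are well defined because $\Omega_{s_1,\cdots,s_r,t}$ is finite and nonempty (its cardinality equals $\ell(H)>0$ by Theorem~\ref{thm:basis1}), which also legitimizes invoking Theorem~\ref{thm:floorofG}.

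The main obstacle—more a point requiring care than a genuine difficulty—is to verify that taking the gcd place-by-place is valid here, namely that the divisor $\lfloor H\rfloor$ produced really is supported only on $\{P_1,\cdots,P_r,P_\infty\}$ and that no cancellation or cross-interaction among the places occurs. This is guaranteed precisely because each $E_{i,j_2,\cdots,j_r}$ has divisor supported on those $r+1$ places alone, so the gcd is computed independently in each coordinate and the formula for $\lfloor H\rfloor$ decouples exactly into the four stated maxima. Once this is observed, the result follows directly by combining Theorems~\ref{thm:basis1} and~\ref{thm:floorofG}.
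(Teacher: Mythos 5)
Your proposal is correct and follows essentially the same route as the paper: invoke Theorem~\ref{thm:basis1} to get the spanning set $\{E_{i,j_2,\cdots,j_r}\}$ indexed by $\Omega_{s_1,\cdots,s_r,t}$, apply Theorem~\ref{thm:floorofG} to write $\lfloor H\rfloor$ as $-\gcd$ of their divisors, and read off the coefficients from Equation~\eqref{eq:E}. Your added remarks (the coefficient-wise minimum interpretation of the gcd and the nonemptiness/finiteness of $\Omega_{s_1,\cdots,s_r,t}$) merely make explicit what the paper leaves implicit in "the desired conclusion then follows."
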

\begin{proof}
		Let $ H =s_1P_1+\cdots+s_rP_r+tP_{\infty} $. It follows from Theorem~\ref{thm:basis1} that the elements $ E_{i,j_2,j_3,\cdots,j_r} $ of Equation~\eqref{eq:Eijr} with $ (i,j_2,\cdots,j_r) \in \Omega_{s_1,\cdots,s_r, t} $ form a basis for the Riemann-Roch space
	$ \mathcal{L}(H) $.
	Note that the divisor of $ E_{i,j_2,\cdots,j_r} $  is
	\begin{equation*}
	iP_1 + \sum_{\mu=2}^{r}(i+m j_{\mu}) P_{\mu} -\left( ri+m\sum_{\mu=2}^{r} j_{\mu} \right) P_{\infty}.
	\end{equation*}	
	By Theorem~\ref{thm:floorofG}, we get that
	\begin{equation*}
	\lfloor H \rfloor =-\gcd\Big\{(E_{i,j_2,\cdots,j_r})~\Big|~ (i,j_2,\cdots,j_r) \in \Omega_{s_1,\cdots,s_r, t} \Big\}.
	\end{equation*}	
	The desired conclusion then follows.
\end{proof}

\section{Examples of codes in Kummer extensions}\label{sec:Examples}

 In this section we treat several examples of codes to illustrate our results. All the codes in our examples have better parameters than the corresponding ones in the MinT's tables~\cite{MinT}.

\begin{example}
	Now, we study codes arising from plane quotients of the Hermitian curve, defined by affine equations of the form $ y^m=x^q+x $ over $ \mathbb{F}_{q^2} $, where $ q $ is a prime power and $m$ is a positive integer which divides $q+1 $. Take $ q=9 $ and $ m=5 $ for example. It follows from~\cite{ballico2013dual} that the number of rational places of the curve $ y^5=x^9+x $ with genus $ g=16 $ is $ N=1+q(1+(q-1)m)=370 $. From Corollary~\ref{cor:semigrouppuregap}, we can get all the pure gaps at $ (P_1,P_{\infty}) $, which are showed in Figure~\ref{fig:puregaps}. Choose $n$ (for example $ 360 \leqslant n \leqslant 368 $) rational places with the exceptions of $ P_1 $ and $ P_{\infty} $ and consider the divisor $ D $ as the sum of these places. We choose a pure gap $ (26,1) $ for instance. It follows from Theorem~\ref{thm:Goppagener} that, if we take $ G=51P_1+P_{\infty} $, then the minimum distance of $ C_{\Omega} $ satisfies
	$ d_{\Omega} \geqslant 24 $.
	One can easily check that the condition $ 2g-2 < \deg(G) < n $ holds, so Equation~\eqref{eq:dimofCLk} yields that the dimension of $ C_{\Omega} $ is $ k_{\Omega}=n-37 $.
	Thus we obtain a class of codes with parameters $ [n,n-37,{\geqslant 24}] $. The minimum distance of each code exceeds the minimum distance of the best known codes over $ \mathbb{F}_{81} $ with the same length and dimension in the MinT's Tables~\cite{MinT}.
	% (see http://mint.sbg.ac.at).
\end{example}
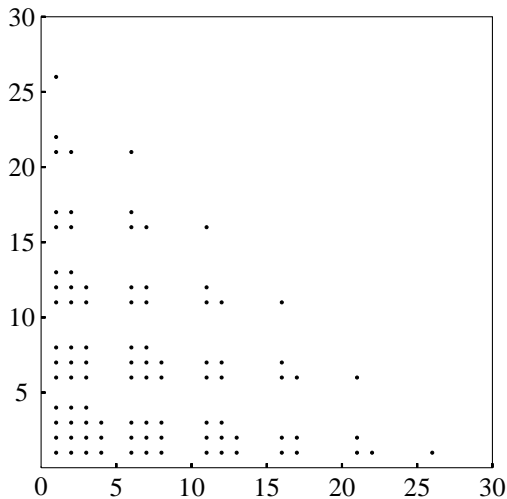
\begin{figure}[H]
	\centering
	\begin{tikzpicture}[scale=0.2]
	\draw [black] (0,0)--(30,0)--(30,30)--(0,30)--(0,0);
	% coordinate

    \draw  node[below]{0} -- (0,0.1);
	\draw [thick] (5,0)--(5,.3)   node[yshift=-2ex,xshift=0ex]{5} -- (5,0);
    \draw [thick] (10,0)--(10,.3) node[yshift=-2ex,xshift=0ex]{10} -- (10,0);
	\draw [thick] (15,0)--(15,.3) node[yshift=-2ex,xshift=0ex]{15} -- (15,0);
	\draw [thick] (20,0)--(20,.3) node[yshift=-2ex,xshift=0ex]{20} -- (20,0);
	\draw [thick] (25,0)--(25,.3) node[yshift=-2ex,xshift=0ex]{25} -- (25,0);
	\draw [thick] (30,0)--(30,.3) node[yshift=-2ex,xshift=0ex]{30} -- (30,0);

	\draw [thick] (0,5)--(0.3,5)   node[yshift=0ex,xshift=-2ex]{5} -- (0,5);
    \draw [thick] (0,10)--(0.3,10) node[yshift=0ex,xshift=-2ex]{10} -- (0,10);
    \draw [thick] (0,15)--(0.3,15) node[yshift=0ex,xshift=-2ex]{15} -- (0,15);
    \draw [thick] (0,20)--(0.3,20) node[yshift=0ex,xshift=-2ex]{20} -- (0,20);
    \draw [thick] (0,25)--(0.3,25) node[yshift=0ex,xshift=-2ex]{25} -- (0,25);
    \draw [thick] (0,30)--(0.3,30) node[yshift=0ex,xshift=-2ex]{30} -- (0,30);

\draw [fill] (1, 1) circle [radius=0.1];
\draw [fill] (1, 2) circle [radius=0.1];
\draw [fill] (1, 3) circle [radius=0.1];
\draw [fill] (1, 4) circle [radius=0.1];
\draw [fill] (1, 6) circle [radius=0.1];
\draw [fill] (1, 7) circle [radius=0.1];
\draw [fill] (1, 8) circle [radius=0.1];
\draw [fill] (1, 11) circle [radius=0.1];
\draw [fill] (1, 12) circle [radius=0.1];
\draw [fill] (1, 13) circle [radius=0.1];
\draw [fill] (1, 16) circle [radius=0.1];
\draw [fill] (1, 17) circle [radius=0.1];
\draw [fill] (1, 21) circle [radius=0.1];
\draw [fill] (1, 22) circle [radius=0.1];
\draw [fill] (1, 26) circle [radius=0.1];
\draw [fill] (2, 1) circle [radius=0.1];
\draw [fill] (2, 2) circle [radius=0.1];
\draw [fill] (2, 3) circle [radius=0.1];
\draw [fill] (2, 4) circle [radius=0.1];
\draw [fill] (2, 6) circle [radius=0.1];
\draw [fill] (2, 7) circle [radius=0.1];
\draw [fill] (2, 8) circle [radius=0.1];
\draw [fill] (2, 11) circle [radius=0.1];
\draw [fill] (2, 12) circle [radius=0.1];
\draw [fill] (2, 13) circle [radius=0.1];
\draw [fill] (2, 16) circle [radius=0.1];
\draw [fill] (2, 17) circle [radius=0.1];
\draw [fill] (2, 21) circle [radius=0.1];
\draw [fill] (3, 1) circle [radius=0.1];
\draw [fill] (3, 2) circle [radius=0.1];
\draw [fill] (3, 3) circle [radius=0.1];
\draw [fill] (3, 4) circle [radius=0.1];
\draw [fill] (3, 6) circle [radius=0.1];
\draw [fill] (3, 7) circle [radius=0.1];
\draw [fill] (3, 8) circle [radius=0.1];
\draw [fill] (3, 11) circle [radius=0.1];
\draw [fill] (3, 12) circle [radius=0.1];
\draw [fill] (4, 1) circle [radius=0.1];
\draw [fill] (4, 2) circle [radius=0.1];
\draw [fill] (4, 3) circle [radius=0.1];
\draw [fill] (6, 1) circle [radius=0.1];
\draw [fill] (6, 2) circle [radius=0.1];
\draw [fill] (6, 3) circle [radius=0.1];
\draw [fill] (6, 6) circle [radius=0.1];
\draw [fill] (6, 7) circle [radius=0.1];
\draw [fill] (6, 8) circle [radius=0.1];
\draw [fill] (6, 11) circle [radius=0.1];
\draw [fill] (6, 12) circle [radius=0.1];
\draw [fill] (6, 16) circle [radius=0.1];
\draw [fill] (6, 17) circle [radius=0.1];
\draw [fill] (6, 21) circle [radius=0.1];
\draw [fill] (7, 1) circle [radius=0.1];
\draw [fill] (7, 2) circle [radius=0.1];
\draw [fill] (7, 3) circle [radius=0.1];
\draw [fill] (7, 6) circle [radius=0.1];
\draw [fill] (7, 7) circle [radius=0.1];
\draw [fill] (7, 8) circle [radius=0.1];
\draw [fill] (7, 11) circle [radius=0.1];
\draw [fill] (7, 12) circle [radius=0.1];
\draw [fill] (7, 16) circle [radius=0.1];
\draw [fill] (8, 1) circle [radius=0.1];
\draw [fill] (8, 2) circle [radius=0.1];
\draw [fill] (8, 3) circle [radius=0.1];
\draw [fill] (8, 6) circle [radius=0.1];
\draw [fill] (8, 7) circle [radius=0.1];
\draw [fill] (11, 1) circle [radius=0.1];
\draw [fill] (11, 2) circle [radius=0.1];
\draw [fill] (11, 3) circle [radius=0.1];
\draw [fill] (11, 6) circle [radius=0.1];
\draw [fill] (11, 7) circle [radius=0.1];
\draw [fill] (11, 11) circle [radius=0.1];
\draw [fill] (11, 12) circle [radius=0.1];
\draw [fill] (11, 16) circle [radius=0.1];
\draw [fill] (12, 1) circle [radius=0.1];
\draw [fill] (12, 2) circle [radius=0.1];
\draw [fill] (12, 3) circle [radius=0.1];
\draw [fill] (12, 6) circle [radius=0.1];
\draw [fill] (12, 7) circle [radius=0.1];
\draw [fill] (12, 11) circle [radius=0.1];
\draw [fill] (13, 1) circle [radius=0.1];
\draw [fill] (13, 2) circle [radius=0.1];
\draw [fill] (16, 1) circle [radius=0.1];
\draw [fill] (16, 2) circle [radius=0.1];
\draw [fill] (16, 6) circle [radius=0.1];
\draw [fill] (16, 7) circle [radius=0.1];
\draw [fill] (16, 11) circle [radius=0.1];
\draw [fill] (17, 1) circle [radius=0.1];
\draw [fill] (17, 2) circle [radius=0.1];
\draw [fill] (17, 6) circle [radius=0.1];
\draw [fill] (21, 1) circle [radius=0.1];
\draw [fill] (21, 2) circle [radius=0.1];
\draw [fill] (21, 6) circle [radius=0.1];
\draw [fill] (22, 1) circle [radius=0.1];
\draw [fill] (26, 1) circle [radius=0.1];

	\end{tikzpicture}
	
	\protect\caption{The pure gaps at $ (P_1,P_{\infty}) $ }
	\label{fig:puregaps}
\end{figure}
\begin{example}
	The Hermitian curve $ y^{q+1}=x^q+x $ over $ \mathbb{F}_{q^2} $ has $ q^3+1 $ rational places and genus $ g=q(q-1)/2 $. If we take $ q=5 $, then $ g=10 $. From Corollary~\ref{cor:semigrouppuregap}, one can verify that $ (13,1) $ and $ (14, 1) $ are pure gaps at $ (P_1, P_2) $. Choose $ n $ arbitrary rational places on the curve except $ P_1$ and $ P_2 $, where $ 120\leqslant n \leqslant 124 $. By Theorem~\ref{thm:Goppagener}, if we take $ G=26P_1+P_2 $, then the minimum distance of $ C_{\Omega} $ satisfies
	$ d_{\Omega} \geqslant 12 $. Since $ 2g-2<\deg(G)<n $, the dimension of such code is $ k_{\Omega}=n+g-1-\deg(G)=n-18 $. This is to say that we obtain a class of codes with parameters $ [n,n-18,{\geqslant 12}] $. Based on MinT's Tables, these codes improve the minimum distance.
\end{example}

\begin{example}
	Consider the curve $ y^{m}=(x^{q^{t/2}}-x)^{q^{t/2}-1} $ over
		$ \mathbb{F}_{q^{t}} $, where $ t $ is even, $ q $ is a prime power,
		$ m ~\big|~(q^{t}-1) $ and $ \gcd(m,q^{t/2}-1)=1 $. From~\cite{Garcia2001}, we know that this curve has genus $ g=(q^{t/2}-1)(m-1)/2 $ and the number of
		$ \mathbb{F}_{q^{t}} $-rational places is
		$ N=(q^{t} - q^{{t}/2})m+q^{t/2}+1 $.	
		 Let $ t=2 $, $ q=5 $ and $ m=6 $. The curve becomes $ y^{6}=(x^{5}-x)^{4} $ with $ g=10 $ and $ N=126 $ over
		 $ \mathbb{F}_{25} $. By Theorem~\ref{lem:Weierstsemi2}, one can verify that
		 \begin{align*}
		 \Big\{(i,j,k)~\Big|~8\leqslant i \leqslant 9, j=1, 1\leqslant k \leqslant 3\Big\} \subseteq G_0(P_1,P_2,P_{\infty}).
		 \end{align*}
		 Let $ D $ be the divisor consisting of all rational places away from places $ P_1,P_2$ and $P_{\infty} $. Taking $ G=16P_1+P_2+3P_{\infty} $, we produce the codes $ C_{\Omega} $ of length $ n=\deg(D)=123 $. It follows from Theorem~\ref{thm:Goppagener},
		 that the minimum distance of $ C_{\Omega} $ satisfies
		 $ d_{\Omega} \geqslant 8 $. Since $ 2g-2<\deg(G)< n $, the dimension of $ C_{\Omega} $ is $ k_{\Omega}=n+g-1-\deg(G)=n-11 $.
		 So the three-point code $ C_{\Omega} $ has parameters $ [n,n-11,\geqslant 8] $, which improves the minimum distance according to MinT's Tables.
\end{example}

\begin{example}
	Let us consider the maximal curve~\cite{abdon1999maximal} $ y^{q+1}= \sum_{i=1}^t x^{q/{2^i} }$ over $ \mathbb{F}_{q^2} $, with genus $ g=q(q-2)/4 $, where $ q=2^t $. Taking $t=3$, the curve becomes $ y^9=x^4+x^2+x  $ over $\mathbb{F}_{64} $ with $ N=257 $ rational places. Let $ H = \gamma P_1+P_2+4P_{\infty} $, where $ \gamma $ is an integer such that $ 14 \leqslant \gamma \leqslant 18 $. We put $ \gamma=14 $ for instance. In this case $ H = 14P_1+P_2+4P_{\infty} $, it can be computed from Equation~\eqref{eq:omegarceiljmu} that the elements $ (-i,-i-mj_2,-i-m j_3,-i-mj_4,ri+m(j_2+j_3+j_4)) $, with $(i,j_2,j_3,j_4)  \in  \Omega_{14,1,0,0,4} $, are as follows
	\begin{align*}
	  &( \,14,  -4,    -4 ,   -4 ,    -2~),\\
	  &( \,13,    -5 ,    -5,    -5 ,    \phantom{-}2~),\\
	  &( \,\phantom{0}9,     \phantom{-}0 ,   \phantom{-} 0 ,  \phantom{-}  0 ,    -9~),\\
	  &(\,\phantom{0}8,     -1 ,    -1 ,    -1,    -5~),\\
	  &(\,\phantom{0}7,     -2,     -2 ,    -2 ,    -1~),\\
	  &(\,\phantom{0}6,     -3 ,   -3 ,    -3 ,\phantom{-}   3~),\\
	  &(\, \phantom{0}0,     \phantom{-}0 ,    \phantom{-} 0 ,\phantom{-}    0 ,\phantom{-}   0~),\\
	  &( -1,     -1,     -1 ,    -1,\phantom{-}  4~).
	\end{align*}
	
	We obtain from Theorem~\ref{thm:floorofH} that $ \lfloor H \rfloor = 14P_1 + 4 P_{\infty} $.
	Choose $ n=254 $ rational places with the exceptions of $ P_1,P_2$ and $P_{\infty} $.
	According to Theorem~\ref{thm:Maharajfloor}, if we let
	$ G=H+\lfloor H \rfloor= 28 P_1 + P_2 +8 P_{\infty} $, then the code
	$ C_{\Omega} $ is a $ [254,228,\geqslant 16] $ code. For $ 14 \leqslant \gamma \leqslant 18 $, one can verify that all of the resulting codes with parameters $ [254,256-2\gamma,\geqslant 2\gamma-12] $ improve the minimum distance with respect to MinT's Tables.
\end{example}

% if have a single appendix:
%\appendix[Proof of the Zonklar Equations]
% or
%\appendix  % for no appendix heading
% do not use \section anymore after \appendix, only \section*
% is possibly needed

% use appendices with more than one appendix
% then use \section to start each appendix
% you must declare a \section before using any
% \subsection or using \label (\appendices by itself
% starts a section numbered zero.)
%

%\appendices
%\section{Proof of the First Zonklar Equation}
%Appendix one text goes here.

% you can choose not to have a title for an appendix
% if you want by leaving the argument blank
%\section{}
%Appendix two text goes here.

% use section* for acknowledgment
%\section*{Acknowledgment}

%The authors would like to thank...

% Can use something like this to put references on a page
% by themselves when using endfloat and the captionsoff option.
\ifCLASSOPTIONcaptionsoff
  \newpage
\fi


\begin{thebibliography}{10}
\providecommand{\url}[1]{#1}
\csname url@samestyle\endcsname
\providecommand{\newblock}{\relax}
\providecommand{\bibinfo}[2]{#2}
\providecommand{\BIBentrySTDinterwordspacing}{\spaceskip=0pt\relax}
\providecommand{\BIBentryALTinterwordstretchfactor}{4}
\providecommand{\BIBentryALTinterwordspacing}{\spaceskip=\fontdimen2\font plus
\BIBentryALTinterwordstretchfactor\fontdimen3\font minus
  \fontdimen4\font\relax}
\providecommand{\BIBforeignlanguage}[2]{{%
\expandafter\ifx\csname l@#1\endcsname\relax
\typeout{** WARNING: IEEEtran.bst: No hyphenation pattern has been}%
\typeout{** loaded for the language `#1'. Using the pattern for}%
\typeout{** the default language instead.}%
\else
\language=\csname l@#1\endcsname
\fi
#2}}
\providecommand{\BIBdecl}{\relax}
\BIBdecl

\bibitem{goppa1977codes}
V.~D. Goppa, ``Codes associated with divisors,'' \emph{Problemy Peredachi
  Informatsii}, vol.~13, no.~1, pp. 33--39, 1977.

\bibitem{Tsfasman}
M.~A. Tsfasman and S.~G. Vl{\u{a}}du{\c{t}}, \emph{Algebraic-geometric codes},
  ser. Mathematics and its Applications (Soviet Series).\hskip 1em plus 0.5em
  minus 0.4em\relax Kluwer Academic Publishers Group, Dordrecht, 1991, vol.~58,
  translated from the Russian by the authors.

\bibitem{niederreiter2001rational}
H.~Niederreiter and C.~Xing, \emph{Rational points on curves over finite
  fields: theory and applications}.\hskip 1em plus 0.5em minus 0.4em\relax
  Cambridge University Press, 2001, vol. 285.

\bibitem{pellikan1991linear}
R.~Pellikan, B.-Z. Shen, and G.~J. van Wee, ``Which linear codes are
  algebraic-geometric?'' \emph{Information Theory, IEEE Transactions on},
  vol.~37, no.~3, pp. 583--602, 1991.

\bibitem{stichtenoth2009algebraic}
H.~Stichtenoth, \emph{Algebraic function fields and codes}.\hskip 1em plus
  0.5em minus 0.4em\relax Springer Science \& Business Media, 2009, vol. 254.

\bibitem{garcia1993consecutive}
A.~Garcia, S.~J. Kim, and R.~F. Lax, ``Consecutive {W}eierstrass gaps and
  minimum distance of {G}oppa codes,'' \emph{Journal of pure and applied
  algebra}, vol.~84, no.~2, pp. 199--207, 1993.

\bibitem{garcia1992goppa}
A.~Garcia and R.~Lax, ``Goppa codes and {W}eierstrass gaps,'' in \emph{Coding
  Theory and Algebraic Geometry}.\hskip 1em plus 0.5em minus 0.4em\relax
  Springer Berlin, 1992, pp. 33--42.

\bibitem{Homma2001Goppa}
M.~Homma and S.~J. Kim, ``Goppa codes with {W}eierstrass pairs,'' \emph{Journal
  of Pure and Applied Algebra}, vol. 162, no. 2-3, pp. 273 -- 290, 2001.

\bibitem{carvalho2005goppa}
C.~Carvalho and F.~Torres, ``On {G}oppa codes and {W}eierstrass gaps at several
  points,'' \emph{Designs, Codes and Cryptography}, vol.~35, no.~2, pp.
  211--225, 2005.

\bibitem{Maharaj2005riemann}
H.~Maharaj, G.~L. Matthews, and G.~Pirsic, ``Riemann-{R}och spaces of the
  {H}ermitian function field with applications to algebraic geometry codes and
  low-discrepancy sequences,'' \emph{Journal of Pure and Applied Algebra}, vol.
  195, no.~3, pp. 261--280, 2005.

\bibitem{maharaj2006floor}
H.~Maharaj and G.~L. Matthews, ``On the floor and the ceiling of a divisor,''
  \emph{Finite Fields and Their Applications}, vol.~12, no.~1, pp. 38--55,
  2006.

\bibitem{cossidente1999curves}
A.~Cossidente, G.~Korchm{\'a}ros, and F.~Torres, ``On curves covered by the
  {H}ermitian curve,'' \emph{Journal of Algebra}, vol. 216, pp. 56--76, 1999.

\bibitem{matthews2005weierstrass}
G.~L. Matthews, ``Weierstrass semigroups and codes from a quotient of the
  {Hermitian} curve,'' \emph{Designs, Codes and Cryptography}, vol.~37, no.~3,
  pp. 473--492, 2005.

\bibitem{Guruswami}
V.~Guruswami and M.~Sudan, ``Improved decoding of {Reed-Solomon} and
  algebraic-geometric codes,'' in \emph{Foundations of Computer Science, 1998.
  Proceedings. 39th Annual Symposium on}, Nov 1998, pp. 28--37.

\bibitem{Yang}
K.~Yang and P.~V. Kumar, ``On the true minimum distance of {Hermitian} codes,''
  in \emph{Coding theory and algebraic geometry}.\hskip 1em plus 0.5em minus
  0.4em\relax Springer, 1992, pp. 99--107.

\bibitem{Yang2}
K.~Yang, P.~V. Kumar, and H.~Stichtenoth, ``On the weight hierarchy of
  geometric {Goppa} codes,'' \emph{IEEE Transactions on Information Theory},
  vol.~40, no.~3, pp. 913--920, 1994.

\bibitem{Homma}
M.~Homma and S.~Kim, ``\BIBforeignlanguage{English}{Toward the determination of
  the minimum distance of two-point codes on a {Hermitian} curve},''
  \emph{\BIBforeignlanguage{English}{Designs, Codes and Cryptography}},
  vol.~37, no.~1, pp. 111--132, 2005.

\bibitem{Homma2}
------, ``\BIBforeignlanguage{English}{The two-point codes on a {Hermitian}
  curve with the designed minimum distance},''
  \emph{\BIBforeignlanguage{English}{Designs, Codes and Cryptography}},
  vol.~38, no.~1, pp. 55--81, 2006.

\bibitem{Homma3}
------, ``\BIBforeignlanguage{English}{The two-point codes with the designed
  distance on a {Hermitian} curve in even characteristic},''
  \emph{\BIBforeignlanguage{English}{Designs, Codes and Cryptography}},
  vol.~39, no.~3, pp. 375--386, 2006.

\bibitem{Homma4}
------, ``\BIBforeignlanguage{English}{The complete determination of the
  minimum distance of two-point codes on a {Hermitian} curve},''
  \emph{\BIBforeignlanguage{English}{Designs, Codes and Cryptography}},
  vol.~40, no.~1, pp. 5--24, 2006.

\bibitem{Geil2003normtrace}
O.~Geil, ``On codes from norm-trace curves,'' \emph{Finite Fields and Their
  Applications}, vol.~9, no.~3, pp. 351 -- 371, 2003.

\bibitem{sepulveda2014weierstrass}
A.~Sep\'{u}lveda and G.~Tizziotti, ``Weierstrass semigroup and codes over the
  curve $ y^q+ y= x^{q^r+1} $,'' \emph{Advances in Mathematics of
  Communications}, vol.~8, no.~1, 2014.

\bibitem{Masuda2}
A.~M. Masuda, L.~Quoos, and A.~Sep{\'u}lveda, ``One-and two-point codes over
  {K}ummer extensions,'' \emph{arXiv preprint arXiv:1510.06425}, 2015.

\bibitem{MinT}
\BIBentryALTinterwordspacing
MinT, ``Online database for optimal parameters of $ (t,m,s) $-nets, $ (t,s)
  $-sequences, orthogonal arrays, and linear codes.'' Accessed on 2016-04-06.
  [Online]. Available: \url{http://mint.sbg.ac.at.}
\BIBentrySTDinterwordspacing

\bibitem{matthews2004weierstrass}
G.~L. Matthews, ``The {W}eierstrass semigroup of an $m$-tuple of collinear
  points on a {H}ermitian curve,'' \emph{Finite Fields and their Applications},
  pp. 12--24, 2004.

\bibitem{kim}
S.~J. Kim, ``On the index of the {W}eierstrass semigroup of a pair of points on
  a curve,'' \emph{Archiv der Mathematik}, vol.~62, no.~1, pp. 73--82, 1994.

\bibitem{Stichtenoth}
H.~Stichtenoth, \emph{Algebraic function fields and codes}.\hskip 1em plus
  0.5em minus 0.4em\relax Springer Science \& Business Media, 2009, vol. 254.

\bibitem{maharaj2004code}
H.~Maharaj, ``Code construction on fiber products of {Kummer} covers,''
  \emph{IEEE Transactions on Information Theory}, vol.~50, no.~9, pp.
  2169--2173, 2004.

\bibitem{Miriam2015}
M.~Abd\'{o}n, H.~Borges, and L.~Quoos, ``Weierstrass points on {K}ummer
  extensions,'' \emph{arXiv:1308.2203v3}, 2015.

\bibitem{ballico2013dual}
E.~Ballico and A.~Ravagnani, ``On the dual minimum distance and minimum weight
  of codes from a quotient of the {H}ermitian curve,'' \emph{Applicable Algebra
  in Engineering, Communication and Computing}, vol.~24, no.~5, pp. 343--354,
  2013.

\bibitem{Garcia2001}
A.~Garcia and L.~Quoos, ``A construction of curves over finite fields,''
  \emph{Acta Arith.}, vol.~98, no.~2, pp. 181--195, 2001.

\bibitem{abdon1999maximal}
M.~Abd\'{o}n and F.~Torres, ``On maximal curves in characteristic two,''
  \emph{Manuscripta Mathematica}, vol.~99, no.~1, pp. 39--53, 1999.

\end{thebibliography}
\end{document}